\newcommand*{\img}[1]{%
    \raisebox{-.2\baselineskip}{%
        \includegraphics[
        height=\baselineskip,
        width=\baselineskip,
        keepaspectratio,
        ]{#1}%
    }%
}
\definecolor{FGreen}{RGB}{6,253,105}
\newtheorem{definition}{Definition}
\newtheorem{theorem}{Theorem}
\newtheorem{lemma}{Lemma} 
\newtheorem{problem}{Problem} 
\newtheorem{example}{Example}
\DeclareMathAlphabet{\mathcal}{OMS}{cmsy}{m}{n}
\DeclareMathOperator*{\argminB}{argmin}   
\newcommand{\zono}[1]{\langle #1 \rangle}
\newcommand{\trace}{\text{Tr}}
\begin{document}

\title{\vspace{.85in}
Privacy-Preserving Set-Based Estimation\\ Using Differential Privacy and Zonotopes}

\author{Mohammed M. Dawoud$^{1}$, Changxin Liu$^{2}$, Karl H. Johansson$^{2}$, and Amr Alanwar$^{1,3}$
\thanks{This paper has received funding from the European Union's Horizon 2020 research and innovation program under grant agreement No. 830927.}
\thanks{$^{1}$School of Computer Science and Engineering, Constructor University, Bremen, Germany {\tt\small\{mdawoud, aalanwar\}@constructor.university}}
\thanks{$^{2}$School of Electrical Engineering and Computer Science, KTH Royal Institute of Technology.  {\tt\small \{changxin, kallej\}@kth.se}}
\thanks{$^{3}$School of Computation, Information and Technology, Technical University of Munich. {\tt\small alanwar@tum.de}.}
}

\vspace{1in}
\maketitle
\begin{abstract}

For large-scale cyber-physical systems, the collaboration of spatially distributed sensors is often needed to perform the state estimation process. Privacy concerns arise from disclosing sensitive measurements to a cloud estimator. To solve this issue, we propose a differentially private set-based estimation protocol that guarantees true state containment in the estimated set and differential privacy for the sensitive measurements throughout the set-based state estimation process within the central and local differential privacy models. Zonotopes are employed in the proposed differentially private set-based estimator, offering computational advantages in set operations. We consider a plant of a non-linear discrete-time dynamical system with bounded modeling uncertainties, sensors that provide sensitive measurements with bounded measurement uncertainties, and a cloud estimator that predicts the system's state. The privacy-preserving noise 
perturbs the centers of measurement zonotopes, thereby concealing the precise position of these zonotopes, i.e., ensuring privacy preservation for the sets containing sensitive measurements.
Compared to existing research, our approach achieves less
privacy loss and utility loss through the central and local differential privacy models by leveraging a numerically optimized truncated noise distribution. The proposed estimator is perturbed by weaker noise than the analytical approaches in the literature to guarantee the same level of privacy, therefore improving the estimation utility. Numerical and comparison experiments with truncated Laplace noise are presented to support our approach. 


\end{abstract}

\begin{IEEEkeywords}
set-based estimation, differential privacy, truncated noise distribution, zonotopes
\end{IEEEkeywords}

\section{Introduction} \label{sec:intro}

\IEEEPARstart{T}{he} ubiquity of automated and intelligent systems, coupled with the proliferation of the Internet of Things (IoT), has become an undeniable reality in our daily lives. From smart homes and devices to Intelligent Transportation Systems (ITS) and advanced weather forecasting, these large-scale systems rely heavily on the exchange of data among their constituents. However, this data sharing presents significant privacy concerns, as sensitive information such as vehicle locations, medical measurements, financial data, and user behaviors are often shared within the system \cite{8686209}.


For instance, sharing location data with navigation applications like Waze and Google Maps can inadvertently expose individuals' whereabouts, compromising their privacy \cite{10.1145/2508859.2516735,6256763,proloc}. Similarly, in Vehicular Ad-hoc Networks (VANETs) within ITS, sharing real identities can lead to the disclosure of drivers' identities, posing a risk to their privacy. Consequently, there is an urgent need for privacy-preserving mechanisms to safeguard data sharing within such systems. Additionally, 
when input disturbances and observation errors are unknown but bounded, safety-critical applications demand guaranteed state inclusion within bounded sets to prevent unsafe conditions.
Motivated by these challenges, this work presents a privacy-preserving mechanism tailored for sets containing sensitive measurements. This mechanism aims to safeguard the privacy of these sensitive measurements throughout the set-based state estimation process.
\subsection{Set-Based Estimation}

Estimation is a ubiquitous phenomenon that plays a crucial role in decision-making across various domains. From an estimated value-based perspective, state estimation can be categorized into point-wise and set-based estimation. The former involves using measurements to calculate a single point-valued estimate, such as determining the location of an autonomous vehicle and the distance between space objects. On the other hand, set-based estimation utilizes measurements along with the system model, if available, to predict a set that encompasses the system's state. With advancements in sensors' capabilities, set-based estimation has become increasingly prevalent in applications such as robotics and autonomous vehicles \cite{conf:setloc}.
The set-based estimation typically employs various set representations, including intervals, polytopes, ellipsoids, zonotopes, and constrained zonotopes \cite{ALTHOFF2010233, Kurzhanskiy200726, SCOTT2016126}. In this paper, we choose zonotopes for set representation due to their computational efficiency and closure under multiple set operations \cite{10.1007/978-3-540-31954-2_19}.

Model-based and data-driven approaches are two distinct methods for set-based state estimation. The model-based approach relies on the system model and sensors' measurements to estimate the system's states \cite{conf:disdiff}.
However, as systems evolve towards greater automation, the proliferation of sensors and the corresponding increase in data volume pose challenges. Developing system models for such complex systems may be prohibitively expensive \cite{doi:10.1177/0142331219879858}, making the data-driven approach more practical in these scenarios \cite{9838494}.

\subsection{State Estimation Under Privacy Constraints}
Existing private estimation strategies can be categorized into encryption-based methods and non-encryption-based methods. For the former, the authors in \cite{kim2019encrypted} developed a private Luenberger observer based on additively homomorphic encryption, where the state matrix only consists of integers such that the observer operates for an infinite time horizon. Also, a partially homomorphic encryption scheme was utilized in \cite{proloc,zhang2020secure,emad2022privacy} to develop a private state estimator. Most recent works in the latter are based on Differential Privacy (DP), a powerful non-cryptographic technique for quantifying and preserving individuals' privacy. 
In essence, DP characterizes a property of a randomized algorithm, ensuring that its output remains stable regardless of any changes made to an individual's information in the database. This stability protects individuals' privacy by mitigating privacy-related attacks.

The DP encompasses two models: the central differential privacy (CDP) model, where a trusted data curator executes the DP algorithm on a centrally held dataset, and the local differential privacy (LDP) model, where each participant independently executes the DP algorithm \cite{9253545}. However, the accuracy of this output is negatively affected by the randomization of the DP algorithms. In \cite{6606817}, a general framework was developed for a differentially private filter. The authors in \cite{degue2017differentially} proposed a differentially private Kalman filter. The methodology has also been generalized to non-linear systems by the authors in \cite{https://doi.org/10.1002/rnc.439}. Note that the above DP observers are point-wise ones, and disturbances or uncertain parameters, which are only known to be bounded, are present for many systems. Hence, set-based estimators and interval estimators can handle state estimation for such systems. Recently, the authors in \cite{9147726} developed a differentially private interval observer with bounded input perturbation. Inspired by the work in \cite{9147726,ALANWAR2023100786}, our work in \cite{10178269} utilizes a truncated additive mechanism with a numerically optimized noise distribution within the context of the CDP model and employs zonotopes for set representation to present a differentially private set-based estimator for linear discrete-time systems. 
\subsection{Contributions}
This paper presents a differentially private set-based estimator that ensures true state containment within the estimated set and provides differential privacy for sensitive measurements throughout the set-based state estimation process. It extends the work in \cite{10178269} to non-linear discrete-time systems within the context of the CDP and LDP models and evaluates this extension on real-world data. 
Specifically, we continue to utilize zonotopes for set representation and employ a truncated additive mechanism with a numerically optimized noise distribution in both models to present a differentially private set-based estimator for non-linear discrete-time systems with bounded modeling and measurement uncertainties. The proposed estimator ensures the privacy of sets containing sensitive sensors' measurements throughout the estimation process with minimal utility loss. Comparative results provided in this paper demonstrate the enhancement in estimation utility.

The main contributions of this article can be summarized as follows:
 \begin{itemize}

\item We introduce a differentially private set-based estimator leveraging a truncated additive mechanism with a numerically optimized noise distribution \cite{DBLP:journals/corr/abs-2107-12957} and employing zonotopes for set representation. The proposed estimator ensures the privacy of sets containing sensors' measurements throughout the estimation process with minimal utility loss.

\item We apply the privacy-preserving noise to the zonotopes containing the sensitive measurements within the context of the CDP and LDP models, concealing the precise positions of these zonotopes.
\item Through our comprehensive evaluation of the proposed differentially private set-based estimator, we illustrate that the truncated Laplace distribution \cite{9147726} necessitates a more significant amount of noise compared to the truncated optimal noise distribution to achieve an equivalent level of privacy under both the CDP and LDP models.
\end{itemize}
All utilized data and code are publicly available\footnote {\label{footnote:DP-estimator}\url{https://github.com/mohammed-dawoud/Differentially-Private-Set-Based-Estimation-Using-Zonotopes}}.

The rest of the paper is organized as follows: The preliminaries and problem statement are presented in Section~\ref{sec:preminilaries}. The algorithms are designed and evaluated in Section~\ref{sec:main} and Section~\ref{sec:eval}, respectively. Finally, we conclude the work in Section~\ref{sec:conc}.
\section{Preliminaries and Problem Setup}
\label{sec:preminilaries}
In this section, we introduce some notation,
 present the needed preliminaries, and then formulate the problem.
\subsection{Notation} Throughout this paper, we denote vectors and scalars by lower case letters, matrices by upper case letters, the set of real numbers by $\mathbb{R}$, the set of integers by $\mathbb{Z}$, and the set of positive integers by $\mathbb{Z}^{+}$. For a given vector $v$, we denote the matrix with $v$ on the diagonal as diag$(v)$. For a given matrix $M\in \mathbb{R}^{m\times n}$, its transpose is given by $M^T$ and its Frobenius norm is given by $\lVert M \rVert_{F} = \sqrt{\trace{(M^T M)}}$. We denote the $i$-th element of a vector or list $a$ by $a^{(i)}$. For a vector $a\in \mathbb{R}^n$, we denote its $L_2$ norm by $| a |_2 = \sqrt{\sum^n_{i=1}(a^{(i)})^2}$.
 For a scalar or a vector $a_k$, we use $\{a_k\}$ to denote a list of multiple $a_k$ such that $k \in \{1,\dots, k_1\}$ and $k_1\in \mathbb{Z}^{+}$.
 We denote the $L_2$ norm of a vector-valued signal $y$ composed of a list of $y_k \in \mathbb{R}^n$, i.e., $y=\{y_k\}$,
 by $\lVert y \rVert_2 = \sqrt{\sum^{\infty}_{k=1}(|y_k|_2)^2}$.
\subsection{Set Representation and Set-based Estimation}
Next, we review set representation, set operations, and set-based estimation based on zonotopes.
\subsubsection{Set Representation and Operations}

The zonotope is defined as follows:
\begin{definition} [Zonotope \cite{z-lop-95}]
\label{def:zonotope} 
Given a center $c_{\mathcal{Z}} \in \mathbb{R}^n$ and $\gamma_{\mathcal{Z}} \in \mathbb{N}$ generator vectors in a generator matrix $G_{\mathcal{Z}}=\begin{bmatrix} g_{\mathcal{Z}}^{(1)}& \dots &g_{\mathcal{Z}}^{(\gamma_{\mathcal{Z}})}\end{bmatrix} \in \mathbb{R}^{n \times \gamma_{\mathcal{Z}}}$, a zonotope is defined as
\begin{align}\label{eqn:zonotope}
	\mathcal{Z} = \Big\{ x \in \mathbb{R}^n \; \Big| \; x = c_{\mathcal{Z}} + \sum_{i=1}^{\gamma_{\mathcal{Z}}} \beta^{(i)} \, g^{(i)}_{\mathcal{Z}} \, ,
	-1 \leq \beta^{(i)} \leq 1  \Big\}.
\end{align}
We use the shorthand notation $\mathcal{Z} = \zono{c_{\mathcal{Z}},G_{\mathcal{Z}}}$ for such a zonotope. 
\end{definition}
Zonotopes are closed under linear maps and Minkowski sum \cite{conf:althoffthesis}. \label{item:linear-map}
The linear map $L \in \mathbb{R}^{m  \times n}$ for zonotope ${\mathcal{Z}}$ is defined and computed as follows:
\begin{align}
L \mathcal{Z} = \{Lz | z\in\mathcal{Z}\}  = \zono{L c_{\mathcal{Z}}, L G_{\mathcal{Z}} }. \label{eqn:linmap}
\end{align}
\label{item:minkowski-sum}
Given two zonotopes $\mathcal{Z}_1=\langle c_{\mathcal{Z}_1},G_{\mathcal{Z}_1} \rangle$ and $\mathcal{Z}_2=\langle c_{\mathcal{Z}_2},G_{\mathcal{Z}_2} \rangle$, the Minkowski sum is defined and computed as 
\begin{align}\label{eqn:minkowski-sum}
     \mathcal{Z}_1 \oplus \mathcal{Z}_2 &= \Big\{ z_1 + z_2 | z_1 \in \mathcal{Z}_1, z_2 \in \mathcal{Z}_2 \Big\}\nonumber\\&= \Big\langle c_{\mathcal{Z}_1} + c_{\mathcal{Z}_2}, [G_{\mathcal{Z}_1}, G_{\mathcal{Z}_2}]\Big\rangle.
\end{align}

The Cartesian product is 
\begin{align}\label{eqn:cartesian-product}
\mathcal{Z}_1 \times \mathcal{Z}_2 &= \bigg\{ \begin{bmatrix}z_1 \\ z_2\end{bmatrix} \bigg| z_1 \in \mathcal{Z}_1, z_2 \in \mathcal{Z}_2 \bigg\} \nonumber\\
&= \Bigg\langle \begin{bmatrix} c_{\mathcal{Z}_1} \\ c_{\mathcal{Z}_2} \end{bmatrix}, \begin{bmatrix} G_{\mathcal{Z}_1} & 0 \\ 0 & G_{\mathcal{Z}_2}\end{bmatrix} \Bigg\rangle.
\end{align}
\\

\subsubsection{{Set-Based Estimation}}
\label{subsubsection:set-based-estimation}
Consider the following non-linear discrete-time dynamical system with bounded modeling and measurement uncertainties: 
\begin{equation}\label{eqn:system-model-non-linear}
\begin{aligned}
    x_{k+1} &= f(x_k) + w_k,\\
    y_{k}^{(i)} &= h^{(i)}(x_k) + v_{k}^{(i)},
\end{aligned}
\end{equation}
where $x_k \in \mathbb{R}^n$ is the system state at time step $k \in \mathbb{Z}^+$, and $y_{k}^{(i)}\in \mathbb{R}$ is the measurement of sensor $i \in \{1,\dots,\;m\}$ with $m$ equals the number of available sensors. 
The functions $f$ and $h^{(i)}$ are assumed to be differentiable.
The vector \begin{math}w_k\end{math} and the scalar \begin{math} v_{k}^{(i)}\end{math} are process and measurement noise, respectively. They are assumed to be unknown but bounded by the zonotopes $\mathcal{Z}_w= \zono{0, G_w}$, and  $\mathcal{Z}^{(i)}_{v}= \zono{0, G^{(i)}_{v}}$, respectively (If the noise zonotopes are not centered around zero, the resulting estimates will be shifted). The system has a bounded initial state $x_0\in \mathcal{\Bar{Z}}_0= \zono{\Bar{c}_0,\Bar{G}_0}$.
At each time step $k$, the set-based state estimator aims to find the corrected state set $\mathcal{\Bar{Z}}_{k}$ by finding the intersection between the predicted state set $\mathcal{\hat{Z}}_{k}$ and the measurement sets $\mathcal{Z}^{(i)}_{y_k}=\zono{y^{(i)}_k, G^{(i)}_{v}}$ corresponding to the sensor measurements $y^{(i)}_k$ and measurement uncertainties $\mathcal{Z}^{(i)}_{v}$, $i=1,\dots,m$ \cite{9838494, ALANWAR2023100786,9811706}. The set-based state estimator is described by Algorithm \ref{alg:set-estimation-using-zonotope}

\begin{algorithm}
\caption{Set-Estimation Using Zonotopes}\label{alg:set-estimation-using-zonotope}
\begin{algorithmic} [1]
{\Statex {\bfseries Input:} Process noise zonotope $\mathcal{Z}_w= \zono{0, G_w}$, and 
  measurement noise zonotopes $\mathcal{Z}^{(i)}_{v}= \zono{0, G^{(i)}_{v}}$, where $i \in \{1, \dots, m\}$.
} 
{ \Statex {\bfseries Output:} $\mathcal{\Bar{Z}}_{k}=\zono{\Bar{c}_k, \Bar{G}_k}$.}
\Statex{\bfseries Initialization:} Set $k = 1$ and $\mathcal{\Bar{Z}}_{0} = \zono{\Bar{c_0},\Bar{G_0}}$.
\While{$True$}

\State The cloud estimator uses the $m$ measurement sets $\mathcal{Z}^{(i)}_{y_k}$ containing the sensitive measurements to perform the estimation process according to the following steps:
\begin{itemize}
    \item {Prediction step:} The set-based state estimator determines the predicted state set $\mathcal{\hat{Z}}_{k}$ using a Taylor series expansion as described in \cite{conf:thesisalthoff}. The predicted state set is determined based on the past corrected state set $\mathcal{\Bar{Z}}_{k-1}$ and a process noise zonotope $\mathcal{Z}_w$. This is given by
\begin{equation}\label{eqn:predicted-state-set}
\mathcal{\hat{Z}}_{k}=f(\mathcal{\Bar{Z}}_{k-1})\oplus \mathcal{Z}_w.
\end{equation}\label{line:predicted-state-set}
\item {Correction step:} The corrected state set $\mathcal{\Bar{Z}}_{k}$ is determined by the reduction of $\mathcal{\Acute{Z}}_k = \zono{\Acute{c}_k, \Acute{G}_k}$, which over-approximates the intersection between the predicted state set $\mathcal{\hat{Z}}_{k}=\zono{\hat{c}_{k}, \hat{G}_{k}}$ and the $m$ measurement sets $\mathcal{Z}^{(i)}_{y_k}$. This is given by
\begin{align}
  \mathcal{\Acute{Z}}_k \supseteq \mathcal{\hat{Z}}_{k} \cap_{i=1}^m \mathcal{Z}^{(i)}_{y_k}.
  \label{eqn:intersection-over-approximation}
\end{align} 
\item {Update the time:} $k=k+1$. \label{line:time-step-alg-1}
\end{itemize}
\EndWhile
\end{algorithmic}
\end{algorithm}

To obtain the corrected state set zonotope $\mathcal{\Bar{Z}}_{k}=\zono{\Bar{c}_{k}, \Bar{G}_{k}}$ with $\Bar{c}_k=\Acute{c}_k$, the order of the generator matrix $\Acute{G}_k$ of $\mathcal{\Acute{Z}}_{k}$ is reduced as follows:
\begin{align}\label{eqn:girard-reduction}
    \Bar{G}_{k} =\downarrow_q	\Acute{G}_k,
\end{align}
where $\downarrow_q$ denotes the reduction in the order of the generator matrix according to \cite{10.1007/978-3-540-31954-2_19}.

\subsection{{Differential Privacy}}
We will present a few notions on DP, which will be used later to develop a differentially private set-based estimator.  Let $\mathcal{H}$ denote the space of datasets of interest (e.g., sensor measurements) \cite{degue2017differentially}. We define a symmetric binary relation on $\mathcal{H}$, called adjacency and denoted by Adj, in which two datasets $h,\Acute{h}\in \mathcal{H}$ are called adjacent, denoted by Adj$(h,\Acute{h})$, if they differ by the value of exactly one individual's data \cite{TCS-042}. Given a pair of adjacent datasets $h,\Acute{h}\in \mathcal{H}$, 
a differentially private randomized mechanism ${M}$ with an anonymized measurable output space $\mathcal{O}$, aims to prevent an adversary from inferring knowledge about an individual's data by generating randomized outputs $M(h), M(\Acute{h}) \in \mathcal{O}$ with close distributions for adjacent inputs. {We use a pair of non-negative constants $(\epsilon,\delta)$ to quantify the privacy loss \cite{10.1007_11787006_1,degue2017differentially,9348921}. 

\begin{definition} [Approximate Differential Privacy (ADP) \cite{10.1007_11787006_1,degue2017differentially}]
\label{def:approximate-dp} 
A randomized mechanism ${M}$, which maps
$\mathcal{H}$ equipped with the adjacency relation Adj$(h,\Acute{h})$ to a measurable space $\mathcal{O}$, is $(\epsilon,\delta)$-ADP, $\epsilon,\delta \geq 0$, if $\forall S \in \mathcal{O}$ and $\forall \; h,\Acute{h}\in \mathcal{H}$ such that Adj$(h,\Acute{h})$,
\begin{align}\label{eqn:adp-formula}
	Pr\bigl[M(h)\in S \bigl]\leq e^{\epsilon} Pr\bigl[M(\Acute{h})\in S \bigl]+\delta. 
\end{align}
If $\delta=0$, ${M}$ is said to be $\epsilon$-differentially private.
\end{definition}

For the set-based estimation problem, the adjacent datasets $h,\Acute{h}\in \mathcal{H}$ can be the centers of measurement sets.

To motivate the work in this paper, consider the following scenario:
\begin{example}
An intrusion detection system installed over a region consists of light imaging, detection, and ranging (LIDAR) sensors distributed throughout the area and an untrusted cloud estimator. This cloud estimator estimates the set of possible locations of an intruding quadcopter in the region using distance measurements with bounded noise provided by the LIDAR sensors. The cloud estimator utilizes these measurement sets to determine the set of possible locations of the intruding quadcopter. We aim to safeguard the measurements against untrusted parties throughout the estimation process.
\end{example}

\begin{figure}[h]
\graphicspath{ {./Figures/} }
    \centerline{
\includegraphics [scale=0.28]{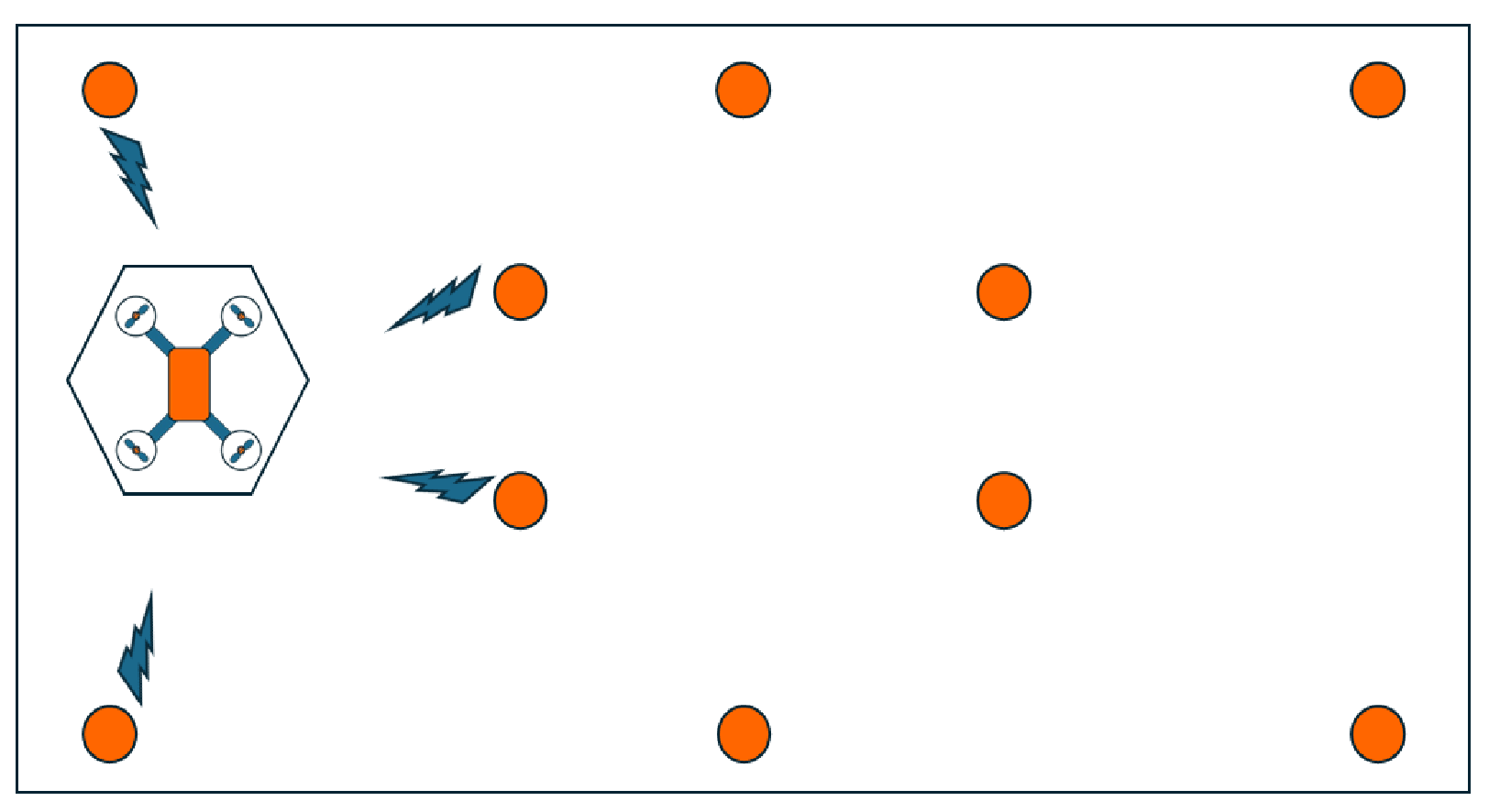}}
    \caption{Intrusion detection system installed over a region,  {\img{./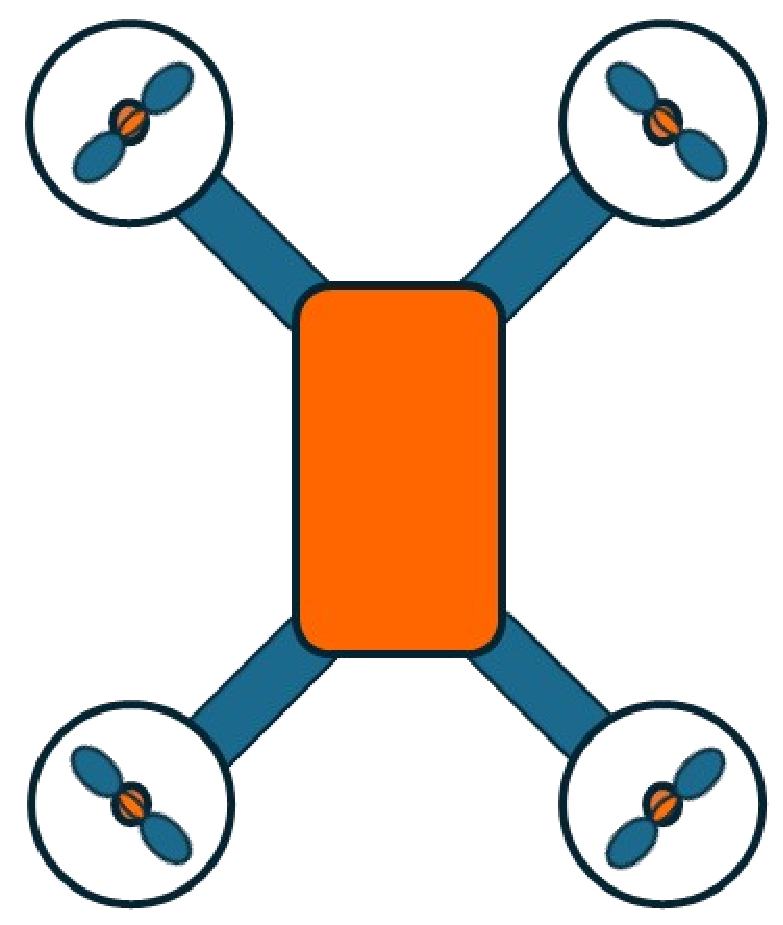}}: the quadcopter, {\img{./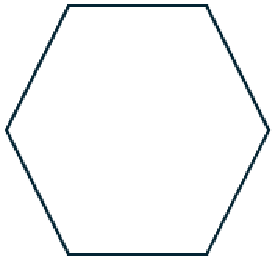}}
    : the estimated set, {\img{./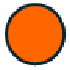}}: LIDAR sensors.}
    \label{fig:example}
\end{figure}

\subsection{Problem Setup}
Consider the following entities for the two cloud-based state estimation setups visualized in Figures \ref{fig:local-DP-Setting} and \ref{fig:Centralized-DP-Setting} \cite{ALANWAR2023100786}.

\begin{itemize}
\item {Plant:} A system that we aim to estimate its set of possible states. We consider a system with a publicly known non-linear discrete-time model described in \eqref{eqn:system-model-non-linear}. 
\item {Sensors:} An array of $m$ sensors, where each entity $i$ produces a private measurement denoted as $y_k^{(i)}$, with $i \in {1, \dots, m}$.
\item {Sensor Manager:} An entity with computational capabilities that enable it to aggregate the measurements of all sensors and perturb them with differential privacy noise. 
\item {Cloud Estimator:} An untrusted entity that performs set-based estimation for the system state.
\end{itemize}

The setup in Figure \ref{fig:local-DP-Setting} uses the LDP model of DP, in which each sensor $i \in \{1,\dots,\;m\}$ locally perturbs its measurement with the privacy-preserving noise before transmitting it to the cloud estimator. The data sets $h,\Acute{h}\in \mathcal{H}$ in this setup are the centers of measurement sets of a single sensor $i$. The setup in Figure \ref{fig:Centralized-DP-Setting} uses the CDP model of DP, in which the sensor manager acts as a trusted data curator, adding privacy-preserving noise to the measurements of $m$ sensors. The data sets in this setup are the centers of measurement sets of $m$ sensors. In both setups, 'protected' refers to measurement sets that have been safeguarded with privacy-preserving noise, while 'unprotected' refers to those without any privacy-preserving noise.

\begin{problem}{We aim to design a differentially private set-based estimator for a plant with the model described in \eqref{eqn:system-model-non-linear}, such that the state estimates are obtained while preserving the privacy of the measurement sets. This estimator obtains state estimates while 
keeping each sensor's measurement private to the sensor and protecting it from any untrusted entity (e.g., the cloud estimator). This is accomplished within the context of the LDP model for the setup in Figure~\ref{fig:local-DP-Setting}  and within the context of the CDP model for the setup in Figure~\ref{fig:Centralized-DP-Setting}.}
\end{problem}
\begin{figure}[H]
    \centering
\begin{subfigure}[H]{0.5\textwidth}
\graphicspath{ {./Figures/} }
    \centerline{
\includegraphics
[width=\textwidth]{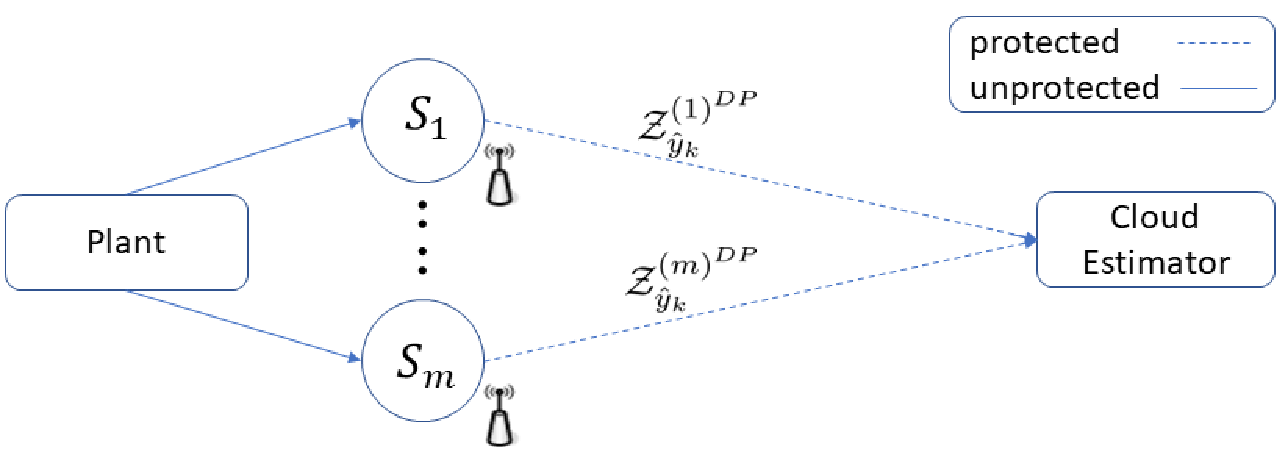}}
    \caption{The cloud estimator is set up within the context of the LDP model.}
    \label{fig:local-DP-Setting}
\end{subfigure}

\begin{subfigure}[H]{0.5\textwidth}
\graphicspath{ {./Figures/} }
    \centerline{
\includegraphics
[width=\textwidth]{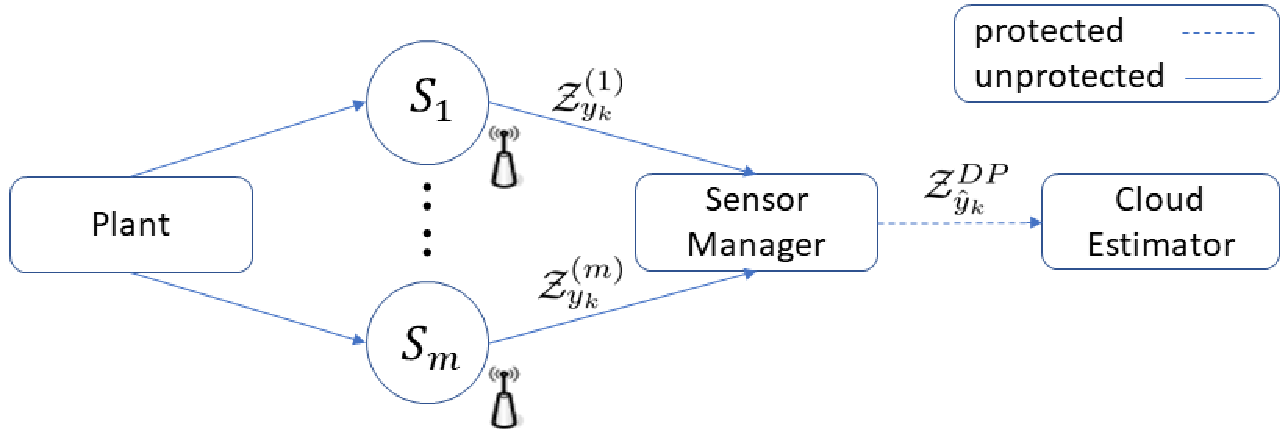}}
    \caption{The cloud estimator is set up within the context of the CDP model.}
    \label{fig:Centralized-DP-Setting}
\end{subfigure}
    \caption{The setups of the cloud estimator within the CDP and LDP models.}
    \label{fig:cloud-estimator-setup}
\end{figure}

\section{Differentially Private Set-Based Estimation}\label{sec:main}

In this section, we develop the differentially private set-based estimator. Based on the characteristics of the zonotopic representation of sets, the actual positions of these sets in the measurement space can be concealed if we perturb their centers. Thus, we protect the measurement sets by adding privacy-preserving noise to their centers, thereby protecting their positions. In subsection \ref{subsection:Design-of-Truncated-Optimal-Additive}, an additive noise mechanism employs the LDP model, while the other additive noise mechanism utilizes the CDP model to introduce perturbations to the sensitive measurements. 
In both cases, we incorporate optimal noise characterized by a numerically generated truncated distribution. These perturbations result in protected measurements. Then, in subsection \ref{subsection:Differentially-Private-Zonotope-based-Set-Membership-Estimation}, we deploy the measurement sets containing the protected measurements to the set-based estimator. 
This enables the derivation of state estimates while keeping the measurements protected against any untrusted party.

\subsection{Design of Truncated Optimal Additive Noise}\label{subsection:Design-of-Truncated-Optimal-Additive}

Additive noise mechanisms such as the Gaussian and Laplace mechanisms typically perturb the measurements with particular random noise to achieve DP. 
However, most of the results require the support of the noise distribution to be unbounded, except for a few cases such as \cite{Liu_2019,Croft2022}. This class of mechanisms is not 
applicable for some applications, such as safety-critical systems \cite{DBLP:journals/corr/abs-2107-12957}, which require state enclosure guarantees in a bounded set. One closely related study \cite{9147726} developed an interval observer with DP based on a truncated Laplace mechanism. However, the design relies on an $L_1$ norm-based adjacency relation and an analytical bound for the noise variance, which may cause the design to be conservative, consuming more noise to achieve DP (a numerical comparison is provided in Section \ref{sec:eval}).
To solve this issue, we follow the numerical approach, recently developed in \cite{DBLP:journals/corr/abs-2107-12957}, to optimize a noise distribution that is subjected to a bounded support constraint and the privacy constraint in Definition \ref{def:approximate-dp}.

In our setups, all sets are defined over the continuous domain of all real numbers. However, the truncated optimal noise distribution we employ to achieve DP is generated numerically. This distribution consists of discrete noise occurrence probabilities, with noise values selected from the continuous domain. Consequently, even after adding this noise to the sets, the resultant domain remains continuous.

Next, we define a class of truncated noise distribution functions. Then, for a fixed DP parameter $\epsilon$ and a particular noise model, we present an optimization problem, where the objective function 
balances between the privacy loss parameter $\delta$ and the utility loss. 
Upon solving this optimization problem, an optimal noise distribution is generated. 

\begin{definition}[Truncated Noise Distribution \cite{DBLP:journals/corr/abs-2107-12957}]\label{def:optimal-noise}
Let ${\Phi}=[-d,d]$ define a bounded noise range such that $d \in \mathbb{R}$, and $\Acute{\Phi}=\{\phi_l\}_{l\in \{1,\dots,\;2N\}}$ be the discretization of $\Phi$ on $2N$ equidistant steps such that $\phi_l\in \Phi$, $N\in \mathbb{Z}^+$, and $l\in \{1,\dots,\;2N\}$,
then a numerically generated, truncated, discrete, equidistant, symmetric, and monotonically decreasing from its zero center noise distribution function, denoted by $P(\phi_l)$, has the following properties
\begin{subequations}\label{eqn:optimal-noise-function}
\begin{equation}
    \sum_{\phi_l\in \Acute{\Phi}} P(\phi_l)=1\;\; \mathrm{and}\;\; P(\phi_l)\geq 0. \;\;\;\; \;\;(\textrm{Distribution})
\end{equation}    
    \begin{equation}
    P(\phi_l) \geq P(\phi_m)\;\;\; \forall\; \phi_m>\phi_l>0, \;\;\;\;(\textrm{Monotonicity}) 
    \end{equation}
\quad \qquad where $l,m\in \{N+1,\dots,\;2N\}$.
    \begin{equation}
   \qquad\quad P(\phi_l) = P(-\phi_l),
    \quad\qquad\qquad
    (\textrm{Symmetry})      
    \end{equation}
    \quad \qquad where $l\in \{1,\dots,\;N\}$.
    \end{subequations}
\end{definition}

For the LDP model, a dataset of interest is the local measurement signal $y^{(i)}\in \mathcal{H}$ released by sensor $i$ as a list of measurements $\{y^{(i)}_k\}_{k\in\{1, \dots, \mathcal{T} \}}$, where $\mathcal{T}=\infty$ is also of interest \cite{9147726}. We aim to provide a certain level of privacy protection for a single sensor's measurement. Two local measurement signals $y^{(i)}$ and $\Acute{y}^{(i)}$ are called adjacent and can be denoted by Adj$(y^{(i)},\Acute{y}^{(i)})$, if and only if they differ by the value of exactly one measurement $y_k^{(i)}$ \cite{degue2017differentially}. In other words, $y^{(i)}$ and $\Acute{y}^{(i)}$ are considered adjacent if there is any time step $k$ at which $y_k^{(i)}\neq \Acute{y}_k^{(i)}$. Since the privacy-preserving noise is added to the measurements themselves, then the allowed variation within Adj$(y^{(i)},\Acute{y}^{(i)})$ is bounded by what is called sensitivity, which is defined formally as follows.

\begin{definition} [Sensitivity - LDP \cite{9147726,6483414}]
\label{def:sensitivity-ldp}
For a given sensor $i$, the allowed deviation for a single measurement $y_k^{(i)}$ between two adjacent local measurement signals $y^{(i)}$ and $\Acute{y}^{(i)}$, i.e., Adj$(y^{(i)},\Acute{y}^{(i)})$, is bounded in the $L_2$ norm by $s$ and given by 
\begin{align}
\lVert y^{(i)}-\Acute{y}^{(i)}\rVert_2 \leq s,
\end{align}
where $s\geq0$.
\end{definition}

Next, the following definition describes the additive noise mechanism for the LDP setup, shown in Figure \ref{fig:local-DP-Setting}, in which each participating sensor locally perturbs its own measurement, and the sensor manager is not present in this setup since it is considered an untrusted entity.  

\begin{definition}[Additive Noise Mechanism - LDP \cite{DBLP:journals/corr/abs-2107-12957}
]\label{def:additive-noise-mechanism-ldp-measurement}
Given a measurement of sensor $i$ $y^{(i)}_k$ and a noise sample $\phi_k \in \Acute{\Phi}$ such that the successive samples of $\phi_k$ are IID with the probability distribution in Definition \ref{def:optimal-noise} and satisfying Lemma \ref{lemma:dp-optimal-noise-mechanism}, then the additive noise mechanism $M^{(i)}_y$ is defined as
\begin{equation}\label{eqn:additive-noise-mechanism-ldp-measurement}
    {M}^{(i)}_y:\hat{y}^{(i)^{DP}}_k = y^{(i)}_k + \phi_k, 
\end{equation}
where 
$\hat{y}^{(i)^{DP}}_k$ is a protected measurement.
\end{definition}

\begin{lemma}[{\cite [Theorem 15]{DBLP:journals/corr/abs-2107-12957}}]\label{lemma:dp-optimal-noise-mechanism}
Let $M^{(i)}_{y}$ be an additive noise mechanism with a sensitivity $s$ (Definition \ref{def:sensitivity-ldp})
and $\Acute{\Phi}=\{\phi_l\}_{l\in \{1,\dots,\;2N\}}$ be the discretization of ${\Phi}$ with the truncated optimal noise distribution $P(\phi_l)$ (Definition \ref{def:optimal-noise}).
If $\forall\;\hat{\Phi}\subseteq \Acute{\Phi}$
\begin{equation}\label{eqn:dp-optimal-noise}
    \sum_{\phi_l\in \hat{\Phi}}P(\phi_l) \leq e^{\epsilon}{\sum_{\phi_l\in \hat{\Phi}}P(\phi_l+s)}+\delta, \end{equation}
then
the additive noise mechanism $M^{(i)}_{y}$ is $(\epsilon, \delta)$-ADP for any $y^{(i)}_k\in \mathcal{H}$.
\end{lemma}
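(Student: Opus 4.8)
The plan is to reduce the $(\epsilon,\delta)$-ADP condition of Definition~\ref{def:approximate-dp} to the summation inequality~\eqref{eqn:dp-optimal-noise} assumed in the hypothesis. I would fix two adjacent signals $y^{(i)}$ and $\Acute{y}^{(i)}$ which, by the LDP adjacency relation, coincide at every time step except a single index $k_0$, where Definition~\ref{def:sensitivity-ldp} gives $\lvert y^{(i)}_{k_0}-\Acute{y}^{(i)}_{k_0}\rvert \le s$. Since the mechanism $M^{(i)}_y$ of Definition~\ref{def:additive-noise-mechanism-ldp-measurement} adds an IID noise sample $\phi_k$ to each coordinate, the joint output law factorizes across time steps. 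For any measurable output event $S$ I would first condition on the coordinates $k\neq k_0$, whose marginal distributions are identical under $y^{(i)}$ and $\Acute{y}^{(i)}$. This collapses the signal-level statement to the single differing coordinate $k_0$: it suffices to establish the bound for the scalar mechanism $z = y^{(i)}_{k_0}+\phi_{k_0}$, after which integrating the per-slice inequality against the common law of the remaining coordinates (which has total mass one) preserves both the $e^{\epsilon}$ factor and the additive $\delta$.

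For the scalar step I would express the output probabilities as discrete sums over the noise support. Writing $\hat{\Phi}=\{\phi_l\in\Acute{\Phi} : y^{(i)}_{k_0}+\phi_l\in S\}$, the additive mechanism gives $Pr[M(y^{(i)}_{k_0})\in S]=\sum_{\phi_l\in\hat{\Phi}}P(\phi_l)$. For the adjacent input, the same output values correspond to noise samples shifted by $y^{(i)}_{k_0}-\Acute{y}^{(i)}_{k_0}$, so that $Pr[M(\Acute{y}^{(i)}_{k_0})\in S]=\sum_{\phi_l\in\hat{\Phi}}P(\phi_l+(y^{(i)}_{k_0}-\Acute{y}^{(i)}_{k_0}))$. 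Taking the worst-case deviation equal to the sensitivity $s$ turns this into $\sum_{\phi_l\in\hat{\Phi}}P(\phi_l+s)$, and the target inequality $\sum_{\phi_l\in\hat{\Phi}}P(\phi_l)\le e^{\epsilon}\sum_{\phi_l\in\hat{\Phi}}P(\phi_l+s)+\delta$ is exactly~\eqref{eqn:dp-optimal-noise}. Because every subset $\hat{\Phi}\subseteq\Acute{\Phi}$ is realizable as $\{\phi_l : y^{(i)}_{k_0}+\phi_l\in S\}$ for a suitable $S$, ranging over all measurable $S$ is equivalent to ranging over all subsets $\hat{\Phi}$, so the hypothesis supplies precisely the bound needed, uniformly in $S$.

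The main obstacle, and the step I would treat most carefully, is justifying the reduction to a shift by exactly $s$. Two issues arise. First, the actual deviation $\lvert y^{(i)}_{k_0}-\Acute{y}^{(i)}_{k_0}\rvert$ may be smaller than $s$ and of either sign; here I would invoke the symmetry $P(\phi_l)=P(-\phi_l)$ to fix the sign and the monotone decay of $P$ away from its center (Definition~\ref{def:optimal-noise}) to argue that the privacy-loss ratio is maximized at the extreme deviation $s$, so checking the boundary case suffices. Second, the noise is discrete while the shift acts on a continuous domain, so the two output grids $y^{(i)}_{k_0}+\Acute{\Phi}$ and $\Acute{y}^{(i)}_{k_0}+\Acute{\Phi}$ need not align, and boundary samples $\phi_l$ with $\phi_l+s\notin[-d,d]$ contribute $P(\phi_l)$ to the left side but $P(\phi_l+s)=0$ to the right side. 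I would handle this by observing that precisely these unmatched, truncated tails constitute the probability mass that cannot be covered by the $e^{\epsilon}$-scaled term, and that~\eqref{eqn:dp-optimal-noise} is exactly the requirement that this residual mass be absorbed by $\delta$. This reading makes the additive $\delta$ in Definition~\ref{def:approximate-dp} the budget for the truncation at the support boundary, and it is what renders the numerically optimized bounded-support distribution compatible with $(\epsilon,\delta)$-ADP. Invoking Theorem~15 of~\cite{DBLP:journals/corr/abs-2107-12957} for the scalar mechanism then closes the argument.
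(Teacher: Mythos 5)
The paper does not prove this lemma at all: it is imported verbatim as Theorem~15 of \cite{DBLP:journals/corr/abs-2107-12957}, so there is no in-paper argument to compare against. Your reconstruction of why that theorem gives the stated conclusion is sound in outline and matches the standard route: factorize the IID noise across time steps to collapse the signal-level adjacency to the single differing coordinate $k_0$ (this is the same reduction the paper later gestures at via \cite[Lemma 2]{6606817}), identify each measurable output event $S$ with the noise subset $\hat{\Phi}=\{\phi_l : y^{(i)}_{k_0}+\phi_l\in S\}$ so that condition~\eqref{eqn:dp-optimal-noise} becomes exactly the $(\epsilon,\delta)$ inequality for the scalar mechanism, and read the additive $\delta$ as the budget absorbing the unmatched mass at the truncated support boundary. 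You also correctly flag the two genuinely delicate points (sub-maximal deviations and grid alignment), which most readers would miss.

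One step is asserted more casually than it can be proved: the claim that symmetry and monotonicity make the privacy loss ``maximized at the extreme deviation $s$,'' so that checking the shift $s$ suffices for all deviations $\lvert\Delta\rvert\le s$. The naive termwise comparison fails: for $0<s'<s$ and $\phi_l$ with $\phi_l+s\le 0$ one has $\lvert\phi_l+s'\rvert>\lvert\phi_l+s\rvert$, hence $P(\phi_l+s')\le P(\phi_l+s)$, so the right-hand side of~\eqref{eqn:dp-optimal-noise} for the smaller shift is not pointwise larger and the inequality for shift $s$ does not transfer subset-by-subset. The correct argument characterizes the worst-case event as the superlevel set $\{\phi_l : P(\phi_l)>e^{\epsilon}P(\phi_l+\Delta)\}$ and shows that the resulting excess mass $\max_{\hat{\Phi}}\bigl(\sum_{\hat{\Phi}}P(\phi_l)-e^{\epsilon}\sum_{\hat{\Phi}}P(\phi_l+\Delta)\bigr)$ is monotone in $\lvert\Delta\rvert$ for symmetric unimodal $P$; this is precisely the content of the cited Theorem~15 and is not a one-line consequence of the distributional properties. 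Since you ultimately invoke that theorem for the scalar mechanism anyway, your argument closes, but as written the monotonicity-in-shift step is a gap if you intended the proof to be self-contained. The same caveat applies to the grid-alignment issue: for a deviation $\Delta$ that is not a multiple of the discretization step $d/N$, the quantity $P(\phi_l+\Delta)$ is not defined on the grid, so the reduction to~\eqref{eqn:dp-optimal-noise} implicitly assumes the sensitivity is commensurate with the discretization (or that $P$ is extended by zero and the mismatch is again charged to $\delta$), which is worth stating explicitly.
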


It deserves noting that, given that the sensitivity $s$ (Definition \ref{def:sensitivity-ldp}) is satisfied $\forall\;k\in \mathbb{Z^+}$, the mechanism $M^{(i)}_y$ is $(\epsilon, \delta)$-ADP at any time step $k$ \cite[Lemma 2]{6606817}. 

    Next, we present the loss function, 
    denoted by $L^{\Omega_t}_{\gamma}$ \cite{DBLP:journals/corr/abs-2107-12957}. 
    This function balances between the privacy parameter $\delta$ and a utility loss $U$ at a fixed $\epsilon$ and is given by \begin{subequations} \label{eqn:optimization-function}
\begin{equation} \label{eqn:optimization-function-1}    L^{\Omega_t}_{\gamma} = \delta+\Omega_t U,
\end{equation}
where the utility loss $U$ is given by
\begin{equation}
  U=\Bigg(\sum_{\phi_l \in \Acute{\Phi}} |\phi_l|^\gamma P(\phi_l)\Bigg)^{1/\gamma}   
\end{equation}
with $\gamma\in\{1,2\}$ such that $\gamma$ selects between $L_1$ or $L_2$ norm-based utility loss, and
\begin{equation}
    \Omega_t= \max\bigg(\frac{\Omega_{start}}{2^{t/\Gamma}},\Omega_{min}\bigg)
\end{equation}
\end{subequations}
is the utility weight at training epoch $t$, where $t\in \mathbb{Z^+}$ with an exponentially decaying rate $\Gamma$ from a starting value $\Omega_{start}$ and with a lower bound $\Omega_{min}$. The optimal noise distribution $P(\phi_l)$ is then optimized by minimizing the weighted sum of the utility loss $U$ and the privacy parameter $\delta$. This noise distribution is generated through the following steps, and we are going to omit $(\phi_l)$ to ease the notation. 
We start by generating the first monotonically increasing half (i.e., $\{\phi_l\}_{l\in \{1,\dots,\; N\}}$) of the noise distribution $P(\phi_l)$, which is given by
\begin{subequations} \label{eqn:noise-model}
\begin{equation}\label{eqn:noise-model-2}
    P_l=1/2 \;\textrm{SoftMax}(r_l);\;r_l \in \{r_0,\dots,\;r_N\},
\end{equation}
where SoftMax($r_l$) = $e^{r_l}/\sum^{N}_{i=0}e^{r_i}$ and $N$ is the number of discretization steps in the half-width of the noise distribution $P(\phi_l)$. The SoftMax function normalizes the $r_l$ values into a distribution, and the $r_l$ values are generated using a model of $v$-stacked Sigmoid functions (i.e., $\sigma(\phi_l)=(1+e^{-\phi_l})^{-1}$), which is given by
\begin{equation}\label{eqn:noise-model-1}
    r_l= \ln \bigg[A^2+\sum_{j=0}^v B_{j}^2\cdot \sigma(C\cdot(\phi_l-F_j)) \bigg],
\end{equation}
where $\phi_l=l\frac{d}{N}-d$. The parameters $A,\; B_j,\;C$, and $F_j$ are randomly initialized, then learned to optimize the loss function in \eqref{eqn:optimization-function} using numerical optimization methods (e.g., stochastic gradient descent (SGD)).

Next, the first half (i.e., $\{\phi_l\}_{l\in \{1,\dots,\; N\}}$) of the noise distribution generated by \eqref{eqn:noise-model-2} and denoted by $P_l$ is mirrored according to the following:
\begin{equation}
    P_j=P_{2N-j+1}\;\; \mathrm{for}\; j\in\{N+1,\cdots,\;2N\}. 
\end{equation}
Then, $P_j$ (i.e., $\{\phi_l\}_{l\in \{N+1,\dots,\;2N\}}$) is concatenated to $P_l$ (i.e., $\{\phi_l\}_{l\in \{1,\dots,\;N\}}$) to obtain the symmetric noise distribution $P(\phi_l)$ (i.e., $\{\phi_l\}_{l\in \{1,\dots,\;2N\}}$). 
\end{subequations}

Note that the case with arbitrarily dimensional and spherically rotation-symmetric noise distributions and sensitivity conditions can be reduced to a 1-dimensional privacy analysis \cite{10.1145/2976749.2978318}. Based on this claim, for the CDP setup, we can define the additive noise for the case in which the dataset of interest is the global measurement signal $y \in \mathcal{H}$ released by an array of $m$ sensors as a list of measurements vectors $\{y_k\}_{k\in\{1, \dots, \mathcal{T} \}}$, where $\mathcal{T}=\infty$ is also of interest and $y_k=\big[{y_k^{(1)}},\dots,{y_k^{(m)}}\big]^T$ \cite{9147726}. Two global measurement signals $y$ and $\Acute{y}$ are called adjacent and can be denoted by Adj$(y,\Acute{y})$, if and only if they differ by the value of exactly one measurement vector $y_k$. Similarly, the allowed variation within Adj$(y,\Acute{y})$ is bounded by what is called sensitivity, which is defined formally as follows.

\begin{definition} [Sensitivity - CDP \cite{9147726,6483414}]
\label{def:sensitivity-cdp}
The allowed deviation for a single measurement vector between two adjacent global measurement signals $y$ and $\Acute{y}$, i.e., Adj$(y,\Acute{y})$, is bounded in the $L_2$ norm by $s_{g}$ and given by 
\begin{align}
\lVert y-\Acute{y}\rVert_2 \leq s_{g},
\end{align}
where $s_{g}\geq0$.
\end{definition}  

Next, for the CDP setup, shown in Figure \ref{fig:Centralized-DP-Setting}, the sensor manager aggregates the measurements of all sensors into a measurement vector and perturbs them with the privacy-preserving noise according to the following additive noise mechanism.

\begin{definition}[Additive Noise Mechanism - CDP
 \cite{DBLP:journals/corr/abs-2107-12957}] \label{def:additive-noise-mechanism}
Given a vector of measurements $y_k$ and a noise vector $\Phi_k \in \Acute{\Phi}$ of independent and identically distributed (IID) coordinates with the probability distribution in Definition \ref{def:optimal-noise} and satisfying Lemma \ref{lemma:dp-optimal-noise-mechanism} using the sensitivity $s_{g}$ (Definition \ref{def:sensitivity-cdp}), and assuming that successive samples of $\Phi_k$ are also IID, then the additive noise mechanism $M_y$ is defined as
\begin{equation}\label{eqn:additive-noise-mechanism}
    {M}_y:\hat{y}^{DP}_k = y_k + \Phi_k, 
\end{equation}
where 
$\hat{y}^{DP}_k$ is a vector of protected measurements.
\end{definition}
Likewise, given that the sensitivity $s_{g}$ (Definition \ref{def:sensitivity-cdp}) is satisfied $\forall\;k\in \mathbb{Z^+}$, the mechanism $M_y$ is $(\epsilon, \delta)$-ADP at any time step $k$ \cite[Lemma 2]{6606817}. 

\subsection{Differentially Private Zonotope-based Set-Membership Estimation}\label{subsection:Differentially-Private-Zonotope-based-Set-Membership-Estimation}
 In this subsection, we introduce the differentially private set-based estimator. The corrupted measurements serve as the centers of the measurement sets. Hence, we have measurement sets with corrupted centers that guarantee differential privacy for the contained sensitive measurements. The generators of all zonotopes are not corrupted with the privacy-preserving noise. 
 We will deploy these measurement sets to the set-based estimator (Algorithm \ref{alg:set-estimation-using-zonotope}), i.e., the cloud estimator, to preserve the privacy of the sensitive measurements during the estimation process and get state estimates while keeping the measurements safeguarded against any untrusted entity. Hence, a differentially private version of that set-based estimator is summarized in the following algorithm.

\begin{algorithm}
\caption{($\epsilon, \delta$)-ADP Set-Estimation Using Zonotopes}\label{alg:adp-set-estimation-using-zonotope}
\begin{algorithmic} [1]
{\Statex {\bfseries Input:} $\epsilon$, $d$, sensitivity, $\mathcal{Z}_p= \zono{0, G_p}$, $\mathcal{Z}_w= \zono{0, G_w}$, and $\mathcal{Z}^{(i)}_{v}= \zono{0, G^{(i)}_{v}}$, $i \in \{1, \dots, m\}$.
} 
{ \Statex {\bfseries Output:} $\mathcal{\Bar{Z}}^{DP}_{k}=\zono{\Bar{c}^{DP}_k, \Bar{G}^{DP}_k}$.}
\Statex{\bfseries Initialization:} The truncated optimal noise distribution $P(\phi_l)$ is generated. Set $k = 1$ and $\mathcal{\Bar{Z}}_{0} = \zono{\Bar{c_0},\Bar{G_0}}$.
\While{$True$}
\State 
Obtain $y_k^{(i)}$ from each sensor $i$, $ i\in \{1,\dots,\; m\}$. \label{line:2-start-add-noise}
\If{the setup in Figure \ref{fig:Centralized-DP-Setting} is deployed}
\State The sensor manager aggregates the $m$ measurements into a vector of measurements $y_k=\big[{y_k^{(1)}},\dots,{y_k^{(m)}}\big]^T$, then adds the DP noise to $y_k$ to obtain protected measurements $\hat{y}^{DP}_k$ following \eqref{eqn:additive-noise-mechanism}.
\ElsIf{the setup in Figure \ref{fig:local-DP-Setting} is deployed}
\State Each sensor locally perturbs its measurement to obtain a protected measurement $\hat{y}^{(i)^{DP}}_k$ according to \eqref{eqn:additive-noise-mechanism-ldp-measurement}.
\EndIf \label{line:8-end-add-noise}

\State The cloud estimator uses 
$\mathcal{Z}^{(i)^{DP}}_{y_k}=\zono{y^{(i)^{DP}}_k, G^{(i)}_{v}}, i\in \{1, \dots, \; m\}$ to perform the estimation process as follows:
\State Compute the predicted state set $ \mathcal{\hat{Z}}^{DP}_{k}= \zono{\hat{c}^{DP}_k, \hat{G}^{DP}_k}$ as follows:
\begin{align}
    \hat{c}^{DP}_k =&f({x_{k-1}^{*}})+\frac{\partial f_{k-1}}{\partial x}|_{x_{k-1}^{*}}\Big(\hat{c}^{DP}_{k-1} {-} x_{k-1}^{*}\Big) {+} c_{L,k},\label{eqn:dp-time-update-zonotope-center-non-linear}
\end{align}
\begin{align}
    \hat{G}^{DP}_k =&\bigg[\frac{\partial f_{k-1}}{\partial x}|_{x_{k-1}^{*}}\hat{G}^{DP}_{k-1}, G_{L,k}, G_w\bigg] .\label{eqn:dp-time-update-zonotope-generator-non-linear}
\end{align}
\label{line:10-predicted-state-set} 
\State Compute the corrected state set $\mathcal{\Bar{Z}}^{DP}_{k}=\zono{\Bar{c}^{DP}_k, \Bar{G}^{DP}_k}$ in lines \ref{line:12} to \ref{line:17}.
\State\begin{align}
    \Acute{c}^{DP}_k =&  \hat{c}^{DP}_{k} + \sum_{i=1}^{m} \lambda^{(i)}_{k} \bigg( y^{(i)^{DP}}_{k} - h^{(i)}\big({x_{k}^{*}}\big)\nonumber\\ 
&-\frac{\partial h_{k}^{(i)}}{\partial x}|_{x_{k}^{*}}(\hat{c}^{DP}_{k} {-} x^{*}_{k}) {-} c_{L,k} 
\bigg),\label{eqn:dp-estimation-with-zonotope-center-non-linear}
\end{align}

\label{line:12}
\State  
\begin{align}
    \Acute{G}^{DP}_k =&\bigg[\bigg(I- \sum_{i=1}^{m} \lambda^{(i)}_{k} 
    \frac{\partial h_{k}^{(i)}}{\partial x}|_{x_{k}^{*}}
    \bigg) \hat{G}^{DP}_{k},\;-\lambda^{(1)}_{k} G_{L,k},\dots, \;\nonumber\\& -\lambda^{(m)}_{k} G_{L,k},-\lambda^{(1)}_{k}G_{p},\dots,-\lambda^{(m)}_{k}G_{p},\;\nonumber\\& -\lambda^{(1)}_{k}G^{(1)}_{v}, \dots, -\lambda^{(m)}_{k}G^{(m)}_{v}  \bigg].\label{eqn:dp-estimation-with-zonotope-generator-non-linear}
\end{align}
\State Compute the weights $\Bar{\Lambda}_k^*$ as follows:
\begin{equation}\label{eqn:dp-optimal-lambda-weights}
\Bar{\Lambda}_k^* = \argminB_{{\lambda}_k}\lVert \Acute{G}^{DP}_k \rVert_{F}^{2},
\end{equation}
where $\Bar{\Lambda}_k^* = [ {\lambda}^{(1)}_{k},\dots,\; {\lambda}^{(m)}_{k}]$. \label{line:14-optimal-weights} 

 
\State Reduce the order of $\Acute{G}^{DP}_k$ 
as in \cite{10.1007/978-3-540-31954-2_19}: $\Bar{G}^{DP}_{k} =\downarrow_q	\Acute{G}^{DP}_k$. 
\State $\Bar{c}^{DP}_k = \Acute{c}^{DP}_k$.
\State $\mathcal{\Bar{Z}}^{DP}_{k}= \zono{\Bar{c}^{DP}_k, \Bar{G}^{DP}_k}$. \label{line:17}
\State Update the time: $k=k+1$. \label{line18:time-step}
\EndWhile
\end{algorithmic}
\end{algorithm}
Algorithm \ref{alg:adp-set-estimation-using-zonotope} summarizes our proposed differentially private set-based estimator. The inputs to the algorithm are the DP parameter $\epsilon$, the noise range $d$, the sensitivity, the privacy-preserving noise zonotope $\mathcal{Z}_p= \zono{0, G_p}$, where $G_{p}$ is the generator matrix, created using the range of the optimal noise $d$, the process noise zonotope $\mathcal{Z}_w= \zono{0, G_w}$, and the measurement noise zonotopes $\mathcal{Z}^{(i)}_{v}= \zono{0, G^{(i)}_{v}}$, where $i \in \{1, \dots, m\}$. At the initialization, the truncated optimal noise distribution $P(\phi_l)$ is generated using $\epsilon$, $d$, and sensitivity according to \eqref{eqn:optimal-noise-function} while optimizing the loss function in \eqref{eqn:optimization-function} by learning the parameters of the noise model in \eqref{eqn:noise-model}. Then, the privacy-preserving noise is added to the sensitive measurements using either of the two additive noise mechanisms described in Definitions \ref{def:additive-noise-mechanism-ldp-measurement} and \ref{def:additive-noise-mechanism} in lines \ref{line:2-start-add-noise} to \ref{line:8-end-add-noise}. The cloud estimator uses the $m$ measurement sets containing the protected measurements $\mathcal{Z}^{(i)^{DP}}_{y_k}=\zono{y^{(i)^{DP}}_k, G^{(i)}_{v}}$ to perform the estimation process. The cloud estimator computes the predicted state set $ \mathcal{\hat{Z}}^{DP}_{k}$ in line \ref{line:10-predicted-state-set}, where the center of the state, $x_{k-1}^{*}$, is used as a linearization point for the state function $f$ at time step $k-1\in \mathbb{Z}^+$. The infinite Taylor series is over-approximated by the first-order Taylor series and its Lagrange remainder $\mathcal{{Z}}_{L,k} = \zono{c_{L,k}, G_{L,k}}$. The corrected state set  $ \mathcal{\Bar{Z}}^{DP}_{k}$ is computed in lines \ref{line:12} to \ref{line:17}, where the center of the state, $x_{k}^{*}$, is used as a linearization point at time step $k\in \mathbb{Z}^+$. The infinite Taylor series is over-approximated by the first-order Taylor series and its Lagrange remainder $\mathcal{{Z}}_{L,k} = \zono{c_{L,k}, G_{L,k}}$ to guarantee true state inclusion. In line \ref{line:14-optimal-weights}, as in \cite{conf:disdiff}, the weights $\Bar{\Lambda}_k^*$ are optimized to reduce the Frobenius norm of the generator matrix $ \Acute{G}^{DP}_k$; therefore, they reduce uncertainty around estimated values. The time is updated for the next time step in line \ref{line18:time-step}.
\begin{theorem}\label{theorem:dp-estimator-non-linear} Given that the two additive noise mechanisms described in Definitions \ref{def:additive-noise-mechanism-ldp-measurement} and \ref{def:additive-noise-mechanism} are $(\epsilon, \delta)$-ADP. Then, Algorithm \ref{alg:adp-set-estimation-using-zonotope} guarantees the true state containment in the estimated set and differential privacy for the sensitive measurements throughout the set-based state estimation process.




\end{theorem}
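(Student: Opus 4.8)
The plan is to separate the statement into its two independent assertions — $(\epsilon,\delta)$-differential privacy and guaranteed containment of the true state in $\bar{\mathcal{Z}}_k^{DP}$ — and to prove each on its own, since they draw on disjoint properties of Algorithm~\ref{alg:adp-set-estimation-using-zonotope}. The privacy claim will follow almost entirely from the hypothesis together with the resilience of differential privacy to post-processing; the containment claim will be an induction on the time index $k$ that tracks how the bounded privacy noise is absorbed by the zonotopic set operations.

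For privacy, I would first observe that the sensitive measurements enter Algorithm~\ref{alg:adp-set-estimation-using-zonotope} only through the additive mechanisms of Definitions~\ref{def:additive-noise-mechanism-ldp-measurement} and~\ref{def:additive-noise-mechanism} applied in lines~\ref{line:2-start-add-noise}--\ref{line:8-end-add-noise}, which are $(\epsilon,\delta)$-ADP by hypothesis (established through Lemma~\ref{lemma:dp-optimal-noise-mechanism}). Every subsequent computation performed by the cloud estimator — the predicted set in~\eqref{eqn:dp-time-update-zonotope-center-non-linear}--\eqref{eqn:dp-time-update-zonotope-generator-non-linear}, the corrected center and generators in~\eqref{eqn:dp-estimation-with-zonotope-center-non-linear}--\eqref{eqn:dp-estimation-with-zonotope-generator-non-linear}, the weights of~\eqref{eqn:dp-optimal-lambda-weights}, and the order reduction — is a fixed deterministic map whose only data-dependent inputs are the protected measurements $\hat{y}^{DP}_k$ (resp. $\hat{y}^{(i)^{DP}}_k$); it never re-accesses the raw measurements. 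Hence $\bar{\mathcal{Z}}_k^{DP}$ is a post-processing of an $(\epsilon,\delta)$-ADP output, and by the \emph{post-processing immunity} of differential privacy it inherits the same $(\epsilon,\delta)$ guarantee. Because the adjacency relations of Definitions~\ref{def:sensitivity-ldp} and~\ref{def:sensitivity-cdp} deem two signals adjacent when they differ at a single time step, the per-step guarantee extends to the entire, possibly infinite, signal, as already noted after Lemma~\ref{lemma:dp-optimal-noise-mechanism} via \cite[Lemma~2]{6606817}.

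For containment I would argue by induction that $x_k \in \bar{\mathcal{Z}}_k^{DP}$ for every $k$, with base case the assumption $x_0 \in \bar{\mathcal{Z}}_0$. For the inductive step, assume $x_{k-1} \in \bar{\mathcal{Z}}_{k-1}^{DP}$. In the prediction step I would invoke the first-order Taylor expansion and its Lagrange-remainder zonotope $\mathcal{Z}_{L,k}$, which over-approximates $f$ on $\bar{\mathcal{Z}}_{k-1}^{DP}$, so that $f(\bar{\mathcal{Z}}_{k-1}^{DP}) \oplus \mathcal{Z}_w \subseteq \hat{\mathcal{Z}}_k^{DP}$ and therefore $x_k = f(x_{k-1}) + w_{k-1} \in \hat{\mathcal{Z}}_k^{DP}$, since $w_{k-1} \in \mathcal{Z}_w$. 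The privacy-specific ingredient is the measurement constraint: from $\hat{y}^{(i)^{DP}}_k = h^{(i)}(x_k) + v_k^{(i)} + \phi_k$ with $v_k^{(i)} \in \mathcal{Z}_v^{(i)}$ and the truncated noise $\phi_k \in [-d,d]$ captured by $\mathcal{Z}_p = \zono{0,G_p}$, the true value satisfies $h^{(i)}(x_k) \in \zono{\hat{y}^{(i)^{DP}}_k, [G_v^{(i)}, G_p]}$, i.e. the protected measurement set enlarged by $\mathcal{Z}_p$. The generator matrix~\eqref{eqn:dp-estimation-with-zonotope-generator-non-linear} is assembled precisely to carry the $-\lambda_k^{(i)} G_p$ and $-\lambda_k^{(i)} G_v^{(i)}$ blocks, so the weighted-intersection construction of lines~\ref{line:12}--\ref{line:17} over-approximates the intersection of $\hat{\mathcal{Z}}_k^{DP}$ with these enlarged measurement sets for \emph{every} choice of $\bar{\Lambda}_k^*$ (the standard soundness property of zonotopic set-membership estimation, cf. \cite{conf:disdiff,ALANWAR2023100786}); since $x_k$ lies simultaneously in $\hat{\mathcal{Z}}_k^{DP}$ and in each enlarged measurement set, it is retained. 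Finally, order reduction is itself an over-approximation, so $x_k \in \bar{\mathcal{Z}}_k^{DP}$, closing the induction.

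The main obstacle is the containment argument across the non-linear prediction and correction: I must verify that the Lagrange-remainder zonotope $\mathcal{Z}_{L,k}$ genuinely bounds the linearization error of both $f$ and each $h^{(i)}$ on the relevant sets, where I would lean on the over-approximation results of \cite{conf:thesisalthoff,conf:disdiff}, and — crucially for the private setting — that the bounded support $[-d,d]$ of the truncated noise is exactly what permits a finite $\mathcal{Z}_p$ to absorb $\phi_k$. This is the precise point where privacy and containment interact: an unbounded mechanism (an untruncated Gaussian or Laplace) would preclude a bounded enclosure, so the truncation underlying Definition~\ref{def:optimal-noise} is essential to reconciling the two guarantees.
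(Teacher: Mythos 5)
Your proposal is correct and follows essentially the same route as the paper's own (very terse) proof: privacy via post-processing immunity of the $(\epsilon,\delta)$-ADP mechanisms, and containment via the boundedness of the process, measurement, and privacy-preserving noises absorbed through Minkowski sums. Your version simply fills in the details the paper leaves implicit --- the induction on $k$, the Lagrange-remainder over-approximation, and the role of $\mathcal{Z}_p$ in enlarging the measurement sets --- and is, if anything, more complete than the published argument.
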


\begin{proof}

The state estimates are obtained while the sensitive measurements are kept DP-protected since the differential privacy guarantees are preserved by post-processing \cite[Proposition~2.1]{TCS-042}. The true state containment in the estimated set follows from the boundness of the process noise, the measurement noise, and the privacy-preserving noise, which are all added as a Minkowski sum to the estimated set.
\end{proof}

\section{Example} \label{sec:eval}
This section evaluates the proposed differentially private set-based estimator in both the LDP and CDP setups visualized in Figure \ref{fig:local-DP-Setting} and Figure \ref{fig:Centralized-DP-Setting}, respectively. Experimental results are presented based on the implementation in MATLAB R2023b, with zonotope operations performed using the CORA toolbox \cite{conf:cora1}. We evaluate the differentially private set-based estimator through the localization of a quadcopter navigating through arbitrary non-linear motion within three-dimensional space measuring $10\; \times\; 10\; \times\; 10\;m^3$. 
According to the following model, the anchor nodes provide measurements as relative distances to the intruding quadcopter. The proposed differentially private set-based estimator can handle the bounded measurement uncertainties and anonymize these measurements with privacy-preserving noise, thus concealing the exact locations of the anchor nodes from the cloud estimator or any untrusted party. This localization is achieved using a high rate of real-world data measurements provided by a set of $8$ anchor nodes distributed across the motion area \cite{conf:d-slats}. The model of this system incorporates a linear state function represented by the matrix $f$, which is given by
\begin{equation*}
f=\text{diag}([1\; 1 \; 1]^T).
\end{equation*}
Additionally, the measurement function $h^{(i)}(x_k)$ is defined as
\begin{equation*}
    h^{(i)}(x_k)=\lVert x_A^{(i)}-x_k\rVert_2, 
\end{equation*}
where $x_A^{(i)}$ represents the location of anchor node $i$ and $x_k$ is an estimate for the location of the quadcopter at time step $k$. 
The measurement and process noise zonotopes $\mathcal{Z}^{(i)}_{v}$ and $\mathcal{Z}_{w}$, respectively, are set to: 
\begin{equation*}
    z^{(i)}_{v}=\Big\langle{ 0 ,\begin{bmatrix} 0.01\;\; 0.02\;\; 0.01 \end{bmatrix}} \Big\rangle
,\end{equation*}
\begin{equation*}
    z_{w}=\Big\langle{[0 \; 0\; 0]^T ,\text{diag}([0.50 \;0.50 \; 0.50 ]^T)
    } \Big\rangle.
\end{equation*}

\subsection{The CDP Model}\label{Evaluation-for-The-CDP-Setup}
In this subsection, we evaluate the differentially private set-based estimator within the context of the CDP setup.
The $(\epsilon,\delta)$-ADP truncated optimal noise distribution (Definition \ref{def:optimal-noise}) is generated according to the noise model in \eqref{eqn:noise-model} with $C=500$. 
We set $d\in[-1,\;1]$, $\epsilon=0.3$, and $s=1$ and the parameters $A$, $B_j$, $F_j$, and $\delta$ are learned to optimize the loss function in \eqref{eqn:optimization-function} using the SGD tool \cite{DBLP:journals/corr/abs-2107-12957}.

\begin{figure*}[h]
    \begin{tabular}{ p{0.30\textwidth}  p{0.3\textwidth}  p{0.3\textwidth}}
        \resizebox{0.31\textwidth}{!}{
            \begin{subfigure}[h]{0.4\textwidth}
      \centering
        \includegraphics[scale=0.35]{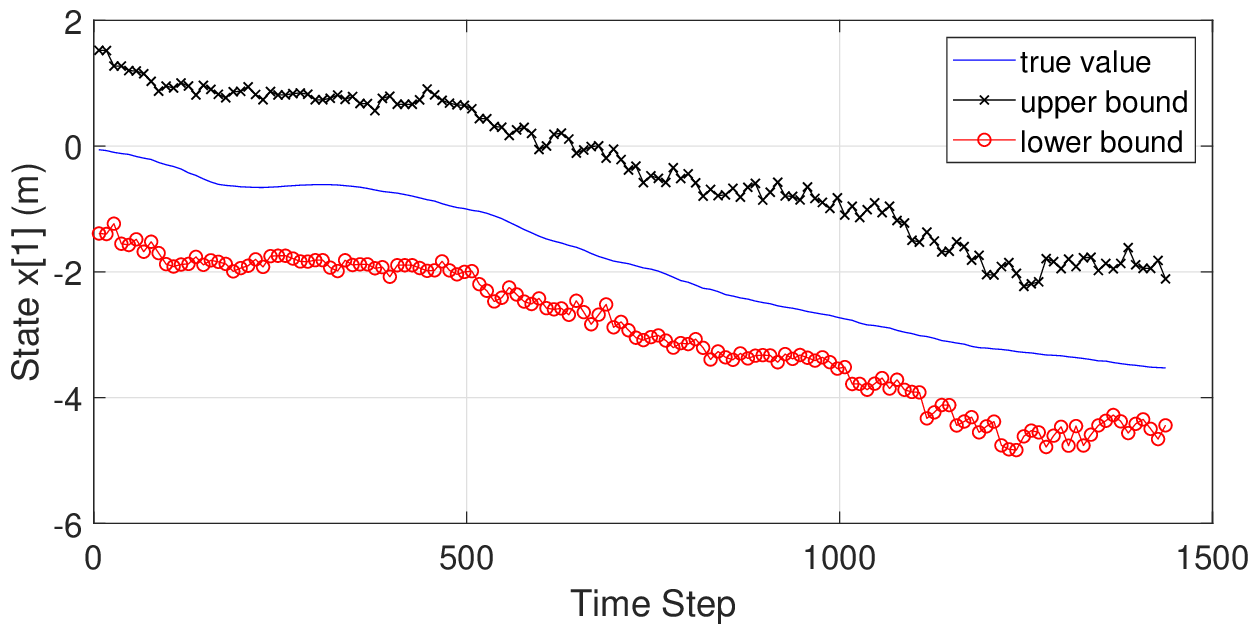}
        \caption{}
        \label{fig:cdp-statex1}
    \end{subfigure}
       } 
   &
   \resizebox{0.31\textwidth}{!}{
            \begin{subfigure}[h]{0.4\textwidth}
      \centering
        \includegraphics[scale=0.35]{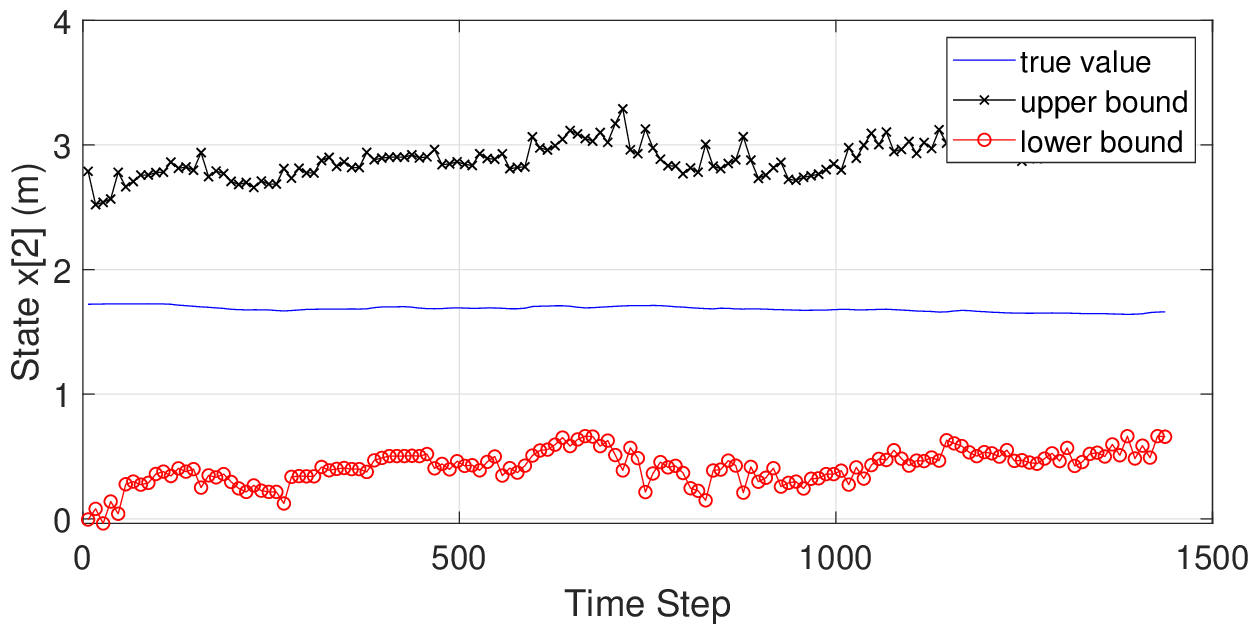}
        \caption{}
        \label{fig:cdp-statex2}
    \end{subfigure}
      }
 &
   \resizebox{0.31\textwidth}{!}{
            \begin{subfigure}[h]{0.4\textwidth}
      \centering
        \includegraphics[scale=0.35]{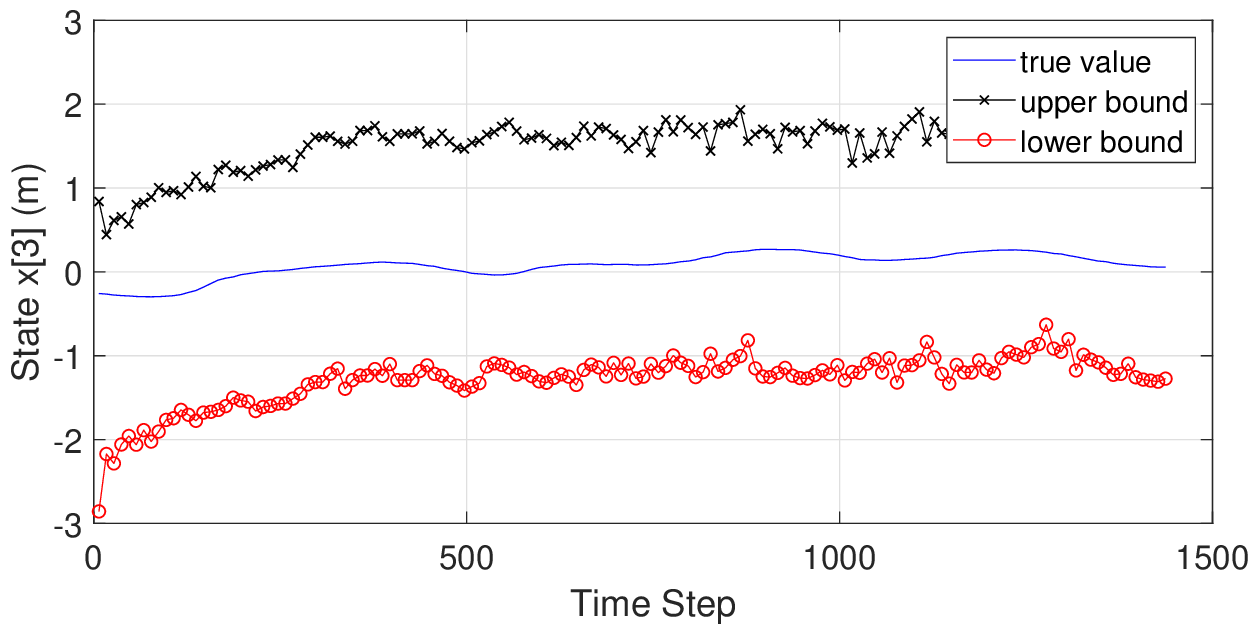}
        \caption{}
        \label{fig:cdp-statex3}
    \end{subfigure}
      }
  \end{tabular}
\caption{True values, upper bounds, and lower bounds of the three-dimensional estimated states using the differentially private set-based estimator within the context of the CDP setup.}
    \label{fig:cdp-set-based-estimation}


\end{figure*}

\begin{figure*}[h]
    \begin{tabular}{ p{0.30\textwidth}  p{0.3\textwidth}  p{0.3\textwidth}}
        \resizebox{0.31\textwidth}{!}{
            \begin{subfigure}[h]{0.4\textwidth}
      \centering
        \includegraphics[scale=0.35]{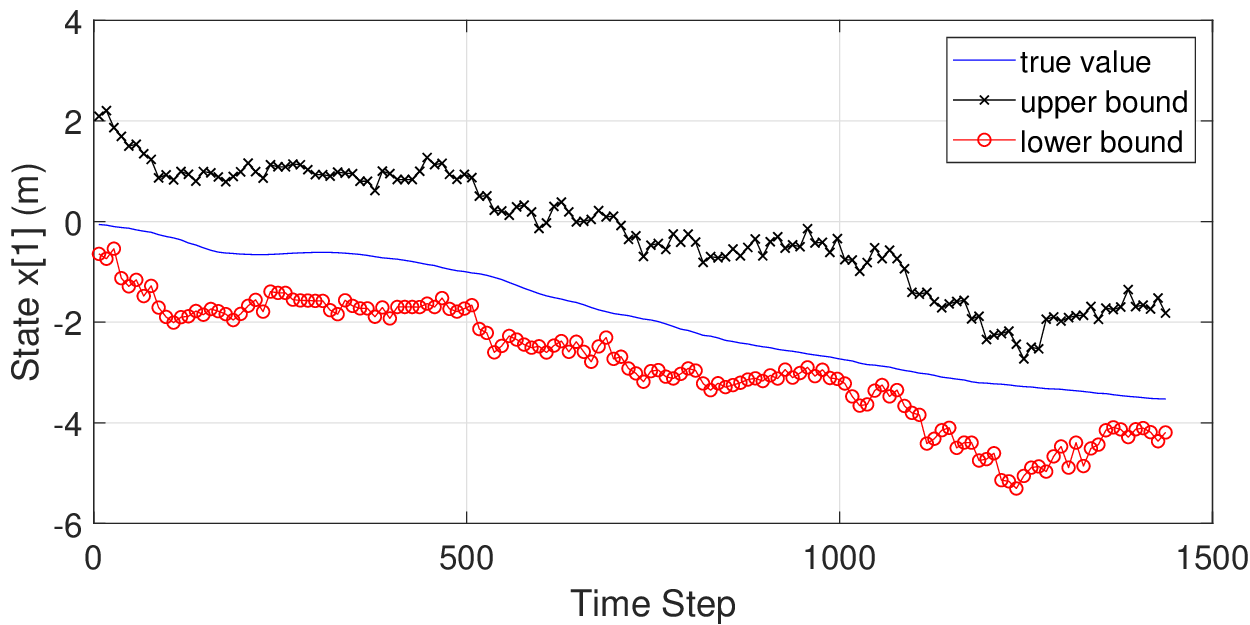}
        \caption{}
        \label{fig:ldp-statex1}
    \end{subfigure}
       } 
   &
   \resizebox{0.31\textwidth}{!}{
            \begin{subfigure}[h]{0.4\textwidth}
      \centering
        \includegraphics[scale=0.35]{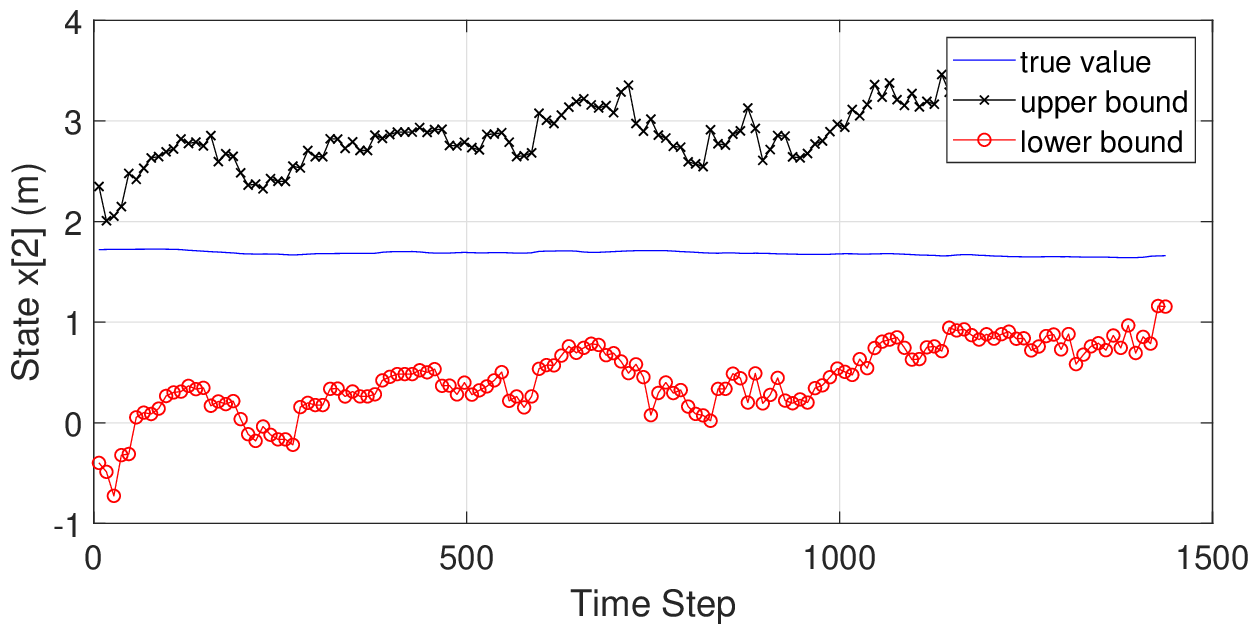}
        \caption{}
        \label{fig:ldp-statex2}
    \end{subfigure}
      }
 &
   \resizebox{0.31\textwidth}{!}{
            \begin{subfigure}[h]{0.4\textwidth}
      \centering
        \includegraphics[scale=0.35]{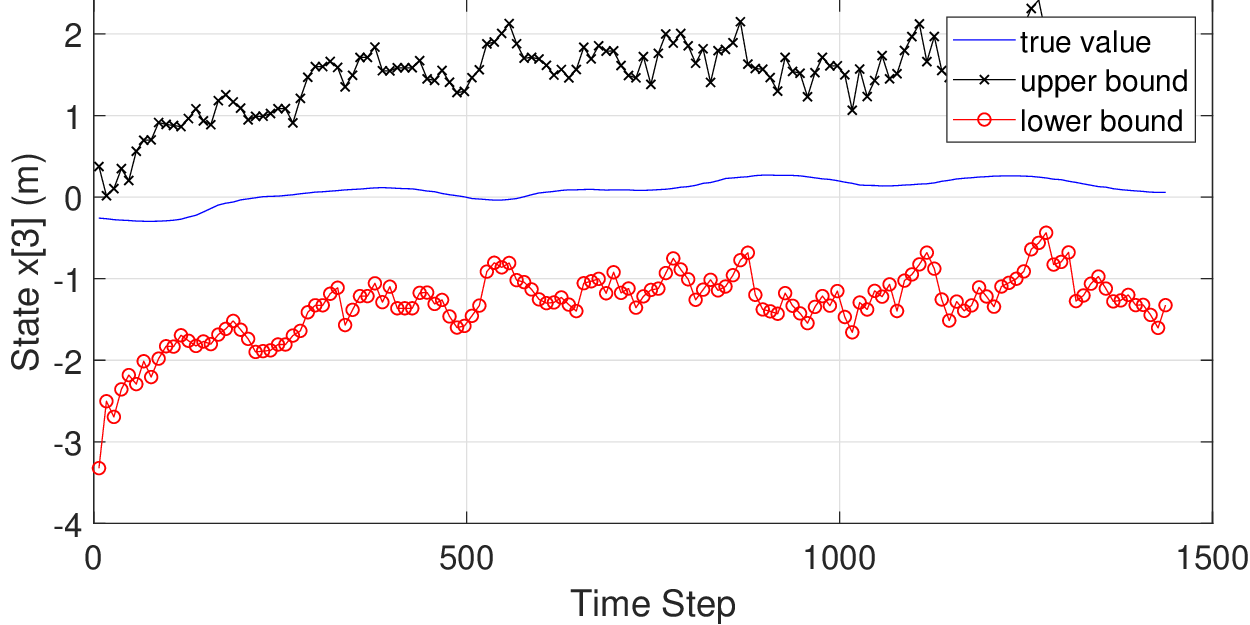}
        \caption{}
        \label{fig:ldp-statex3}
    \end{subfigure}
      }
  \end{tabular}
\caption{True values, upper bounds, and lower bounds of the three-dimensional estimated states using the differentially private set-based estimator within the context of the LDP setup.}
    \label{fig:ldp-set-based-estimation}


\end{figure*}

Figure \ref{fig:Object-tracking} represents a random snapshot for the localization of the quadcopter. We notice that the quadcopter is enclosed by the estimated zonotope, which indicates that the state containment is still guaranteed. Also, the center of the estimated zonotope is very close to the quadcopter, which is a good utility indicator. In industrial applications, for a certain privacy budget $\epsilon$, the selection of the optimal noise range $d$, illustrated in Table \ref{table-delta}, should be guided by the acceptable error ranges. Also, Figure \ref{fig:cdp-set-based-estimation} shows the true values, upper bound, and lower bound for each dimension of the estimated state using the proposed differentially private set-based estimator within the context of the CDP setup. 
For comparison, we consider the work in \cite{9147726}, where a differentially private interval estimator deploying truncated Laplace noise is presented. The truncated Laplace noise range, for a given $\epsilon,\delta$, and sensitivity $s$, is determined by 
\begin{equation}
a=\frac{s}{\epsilon}\ln{\Big(1+e^{\epsilon}\frac{1-e^{-\epsilon}}{2\delta}\Big)}.    
\end{equation}

Figure \ref{fig:deltavs-noise-range} indicates that the truncated Laplace noise needed to achieve a certain $\delta$ is wider than the $(\epsilon,\delta)$-ADP truncated optimal noise needed to achieve the same $\delta$. Indeed, for a certain privacy budget $\epsilon$, learning the truncated optimal noise distribution $P(\phi_l)$ using the SGD tool in \cite{DBLP:journals/corr/abs-2107-12957} allows our proposed differentially private set-based estimator (Algorithm \ref{alg:adp-set-estimation-using-zonotope}) to minimize loss of privacy and utility simultaneously. Hence, at a certain $\delta$ value, we find that the truncated Laplace noise causes a higher utility loss than the $(\epsilon,\delta)$-ADP truncated optimal noise.

In particular, we calculate the error in the estimated location as the distance between the estimated zonotope's center and the quadcopter's true location. Then, we compare the utility loss, represented by the second norm of the error in the estimated location of the quadcopter, associated with the $(\epsilon,\delta)$-ADP truncated optimal noise
and the truncated Laplace noise \cite{9147726} at $\epsilon=0.3$. The results presented in Figure \ref{fig:Estimation-error-vs-noise-acc-vs-optimal-quad-copter-CDP} demonstrate that the utility loss incurred when employing the $(\epsilon,\delta)$-ADP truncated optimal noise is lower than that observed with the truncated Laplace noise at certain privacy level. From the comparative results, we can notice that the proposed differentially private set-based estimator, utilizing $(\epsilon,\delta)$-ADP truncated optimal noise leads to less utility loss compared to employing truncated Laplace noise \cite{9147726} in the context of the CDP setup.




\begin{figure}[h]
\graphicspath{ {./Figures/} }
    \centerline{
\includegraphics
[scale=0.38]{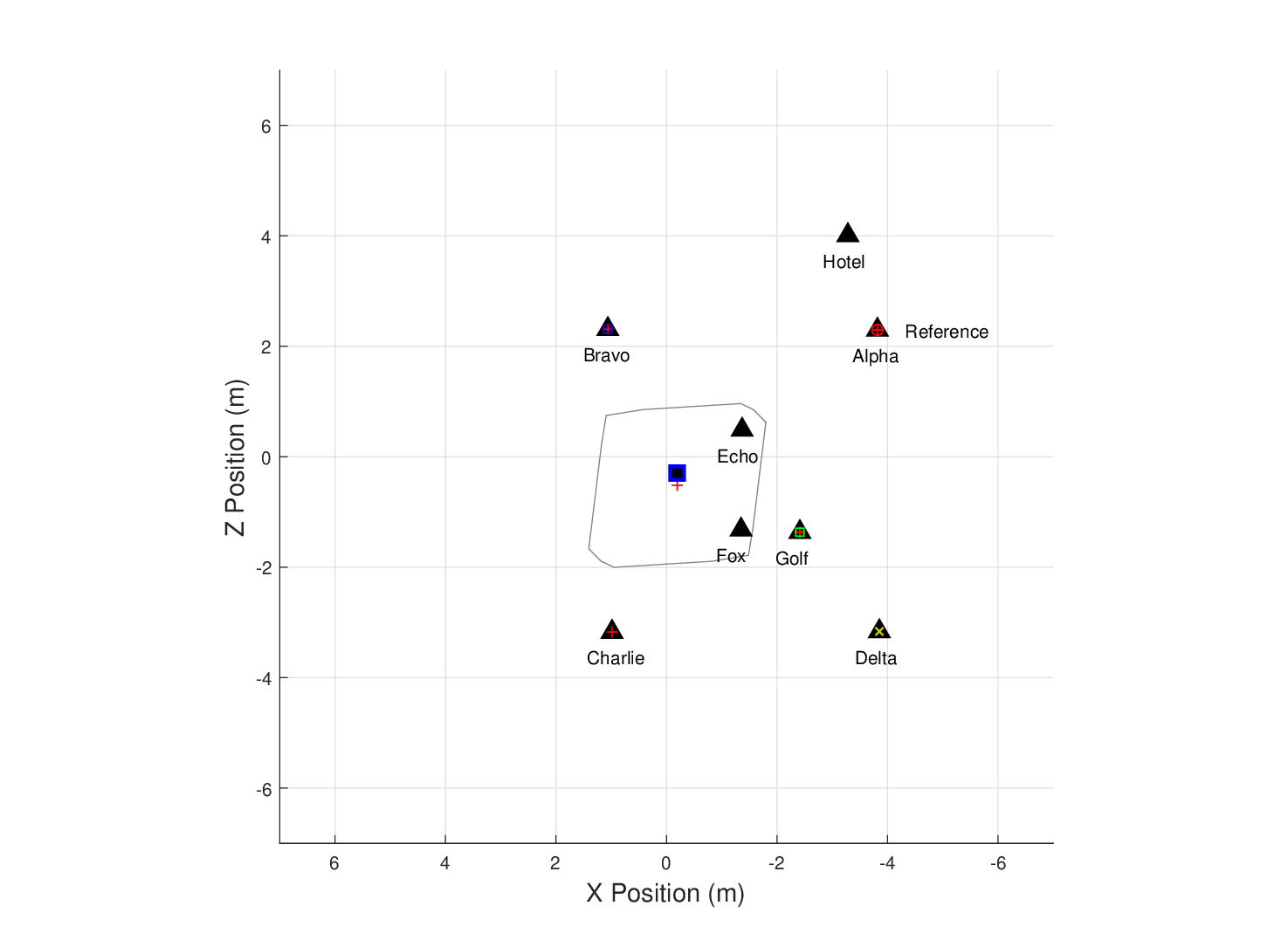}}
    \caption{Localization of a quadcopter navigating through arbitrary non-linear motion,  {\img{Figures/box}}: the quadcopter, {\img{Figures/triangle-black}}
    : anchor nodes, {\color{red}+} : center of the estimated zonotope.}
    \label{fig:Object-tracking}
\end{figure}

\begin{table*}[htbp]\centering 
\begin{small}
\begin{tabular}{@{}lrrrrrrrrrrrr@{}}\toprule
{$\epsilon$\;/\;Noise Range $d(m)$} & {3} & {5} & {7} & {9} & {11} & {13} & {15} \\ \midrule
{0.1} & 0.1502\ & 0.0811\ & 0.0518\ & 0.0360\ & 0.0262\ & 0.0197\ & 0.0151\ \\ 
{0.3} & 0.1198\ & 0.0503\ & 0.0244\ & 0.0126\ & 0.0067\ &0.0036\ & 0.0020\ \\ 
{0.5} & 0.0931\ & 0.0290\ & 0.0101\ & 0.0036\ & 0.0013\ & 0.0005\ & 0.0002\ \\ 
{0.7} & 0.0707\ & 0.0158\ & 0.0038\ & 0.0009\ & 0.0002\ & $5.64e^{-5}$\ & $1.39e^{-5}$\ \\ 
\bottomrule
\end{tabular}
\end{small}
\caption{Optimal $\delta$ values corresponding to different ranges $d(m)$ of $(\epsilon,\delta)$-ADP optimal noise at $\epsilon={0.1,0.3,0.5,0.7}$.}\label{table-delta}
\end{table*}

\subsection{The LDP Model}\label{Evaluation-for-The-LDP-Setup}


In this subsection, we evaluate the proposed differentially private set-based estimator within the context of the LDP setup. It deserves mentioning that the sensor manager is not present to aggregate the measurements of the sensors since it is treated as an untrusted entity in the LDP setup. Instead, each sensor locally perturbs its own measurement with the privacy-preserving noise. As in the CDP setup of this example, the $(\epsilon,\delta)$-ADP truncated optimal noise distribution is regenerated for the LDP setup. Figure \ref{fig:ldp-set-based-estimation} shows the true values, upper bound, and lower bound for each dimension of the estimated state using the proposed differentially private set-based estimator within the context of the LDP setup. In the context of the LDP setup, we again compare the utility loss of $(\epsilon,\delta)$-ADP truncated optimal noise 
and truncated Laplace noise \cite{9147726} at $\epsilon=0.3$. 
The comparative results in Figure \ref{fig:Estimation-error-vs-noise-acc-vs-optimal-quad-copter-LDP} demonstrate that in the proposed differentially private set-based estimator, utilizing $(\epsilon,\delta)$-ADP truncated optimal noise leads to less utility loss compared to employing truncated Laplace noise \cite{9147726} in the context of the LDP setup.

Furthermore, our proposed differentially private set-based estimator (Algorithm \ref{alg:adp-set-estimation-using-zonotope}) utilizes zonotopes for set representation. Zonotopes offer a less conservative set representation that enables efficient computation of linear maps and Minkowski sums, both essential operations in the set-based estimation process.
Consequently, this confers a computational advantage to the proposed differentially private set-based estimator.


\section{Conclusion} \label{sec:conc}
We have proposed a differentially private set-based estimator that performs set-based estimation in a privacy-preserving manner so that an adversary cannot learn the actual values of measurements from state estimates. 
The proposed differentially private set-based estimator ensures the privacy of the measurement sets containing sensitive sensors' measurements throughout the estimation process with minimal utility loss. Evaluation results demonstrate that a wider range of truncated Laplace noise is needed compared to $(\epsilon,\delta)$-ADP truncated optimal noise to achieve a specific value of $\delta$. Consequently, the truncated optimal noise incurs lower utility loss than truncated Laplace noise with equivalent $\epsilon$ and $\delta$ values under both central and local differential privacy models. Additionally, using zonotopes for set representation in our proposed differentially private set-based estimator confers a computational advantage.

\begin{figure} [t]
\graphicspath{ {./Figures/} }
\centerline{\includegraphics[scale=0.56]{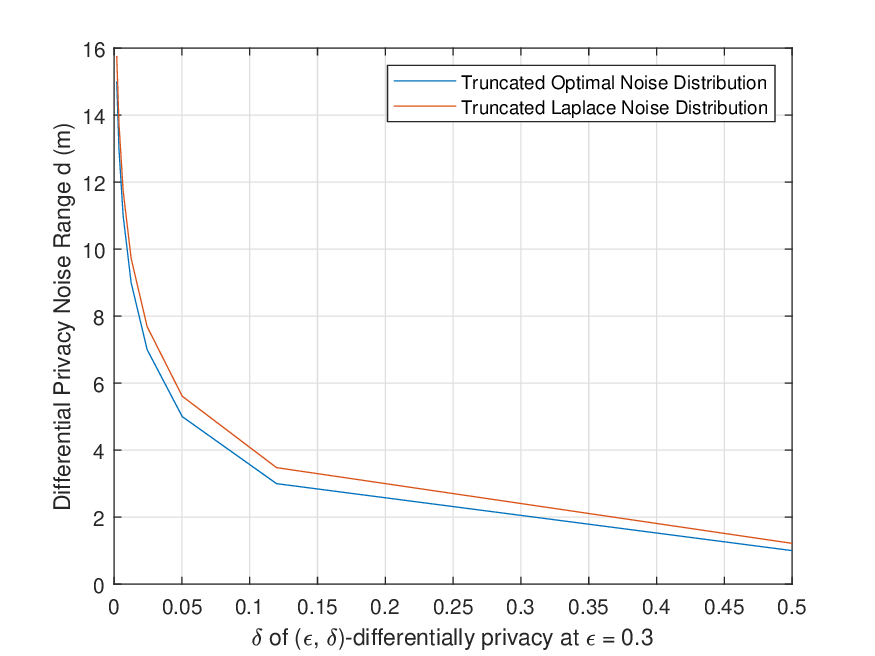}}
\caption{Comparison of $\delta$ values with needed range of both $(\epsilon,\delta)$-ADP truncated optimal noise and truncated Laplace noise. }
\label{fig:deltavs-noise-range}
\end{figure}

\begin{figure}[t]
\graphicspath{ {./Figures/} }
\centerline{\includegraphics[scale=0.56]{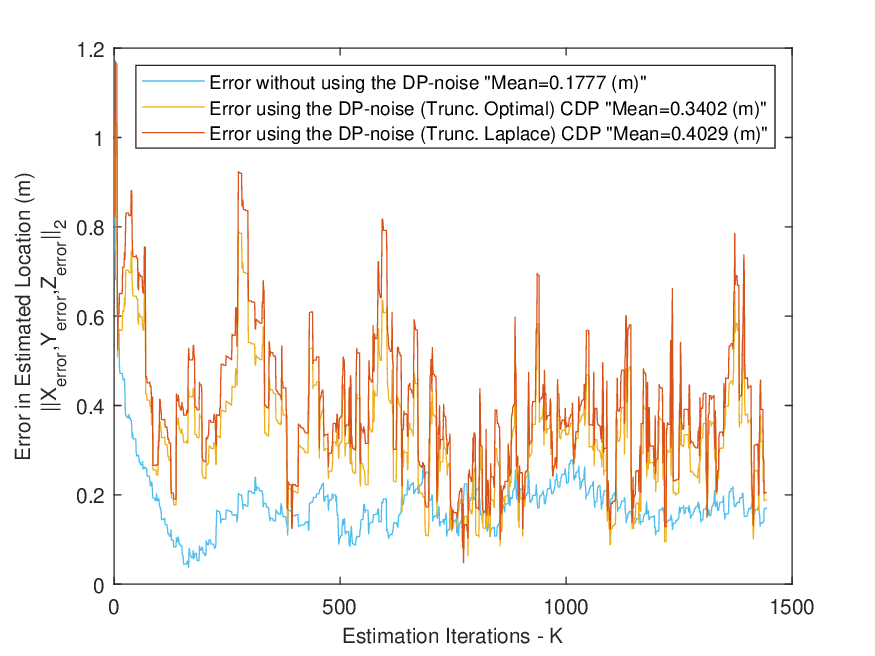}}
\caption{
Comparison of the impact of $(\epsilon,\delta)$-ADP truncated optimal noise and truncated Laplace noise on the error in the estimated location of the quadcopter at $\epsilon=0.3$, within the CDP setup.}
\label{fig:Estimation-error-vs-noise-acc-vs-optimal-quad-copter-CDP}
\end{figure}

\begin{figure}[t]
\graphicspath{ {./Figures/} }
\centerline{\includegraphics[scale=0.56]{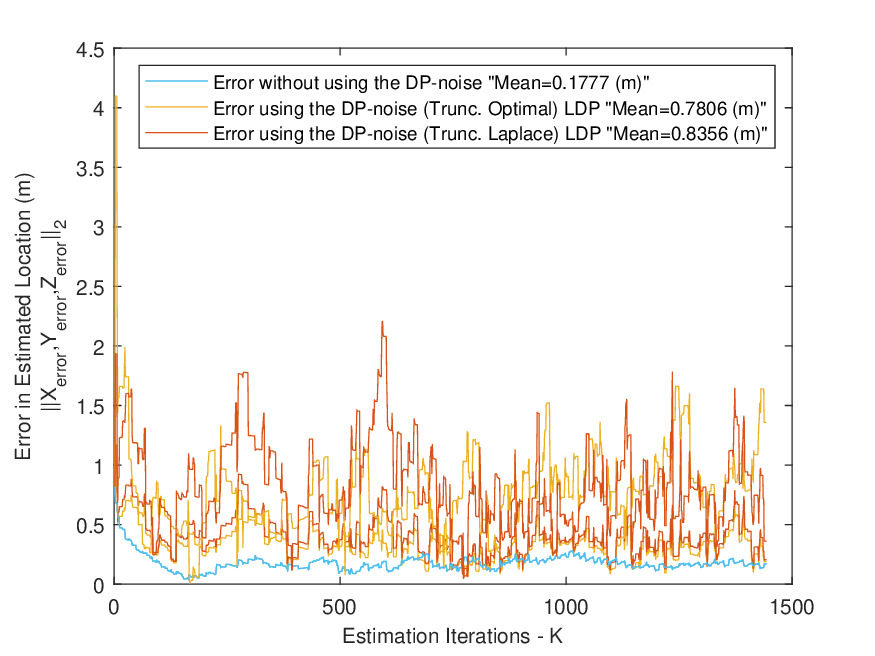}}
\caption{Comparison of the impact of $(\epsilon,\delta)$-ADP truncated optimal noise and truncated Laplace noise on the error in the estimated location of the quadcopter at $\epsilon=0.3$, within the LDP setup.}
\label{fig:Estimation-error-vs-noise-acc-vs-optimal-quad-copter-LDP}
\end{figure}

\small
\bibliographystyle{ieeetr}
\bibliography{references}

\end{document}